\newtheorem{theorem}{Theorem}[section]
\newtheorem{proposition}[theorem]{Proposition}
\newtheorem{lemma}[theorem]{Lemma}
\theoremstyle{definition}
\def\ds{\displaystyle}
\renewcommand{\geq}{\geqslant}
\renewcommand{\leq}{\leqslant}
\renewcommand{\preceq}{\preccurlyeq}
\def\airports{\mathcal{P}}
\def\calD{\mathcal{D}}
\def\calV{\mathcal{V}}
\def\calA{\mathcal{A}}
\def\boldx{\boldsymbol{x}}
\def\boldy{\boldsymbol{y}}
\def\bolds{\boldsymbol{s}}
\def\bbZ{\mathbb{Z}}
\def\un{\boldsymbol{1}}
\title{Aircraft routing: periodicity and complexity}
\author{Frédéric Meunier}
\address{CERMICS, ENPC, Institut Polytechnique de Paris, Marne-la-Vallée, France}
\email{frederic.meunier@enpc.fr}
\author{Axel Parmentier}
\address{CERMICS, ENPC, Institut Polytechnique de Paris, Marne-la-Vallée, France}
\email{axel.parmentier@enpc.fr}
\author{Nour ElHouda Tellache}
\address{N.E.H. Tellache, Decision Support \& Operations Research Group, Department of Informatics, University of Fribourg, Fribourg, Switzerland}
\email{nourelhouda.tellache@unifr.ch}
\keywords{aircraft routing; periodicity; \NP-hardness; pebbling game}
\begin{document}
	\begin{abstract}
   The aircraft routing problem is one of the most studied problems of operations research applied to aircraft management. It involves assigning flights to aircraft while ensuring regular visits to maintenance bases. This paper examines two aspects of the problem. 
   
   First, we explore the relationship between periodic instances, where flights are the same every day, and periodic solutions. The literature has implicitly assumed—without discussion—that periodic instances necessitate periodic solutions, and even periodic solutions in a stronger form, where every two airplanes perform either the exact same cyclic sequence of flights, or completely disjoint cyclic sequences. However, enforcing such periodicity may eliminate feasible solutions. We prove that, when regular maintenance is required at most every four days, there always exist periodic solutions of this form.  
   %We prove that in certain cases, this is not a concern: specifically, when regular maintenance is required at most every four days, periodic instances always admit periodic solutions.

    Second, we consider the computational hardness of the problem.  Even if many papers in this area refer to the \NP-hardness of the aircraft routing problem, such a result is only available in the literature for periodic instances. We establish its \NP-hardness for a non-periodic version. Polynomiality of a special but natural case is also proven.
\end{abstract}

\maketitle

\section{Introduction}

\subsection{Aircraft routing} 

Aircraft management presents many interesting problems for operations research, some of which are especially appealing for theoretical investigation. One prominent example is the aircraft routing problem. In this problem, we are given a fleet of airplanes and a collection of flight legs to be assigned to them. The challenge lies in the requirement that every airplane must undergo regular maintenance, which can only be performed at specific airports, known as bases. The constraint of regular maintenance translates then into the constraint that every airplane must visit regularly such bases.

This paper explores two versions of the aircraft routing problem, both of which are practically relevant and have been studied in the literature. In the first version, the instance is periodic, meaning the flight legs are identical each day (often referred to as the ``daily fleet assignment problem'' in the literature~\cite{Talluri2015}). Previous studies have only been looking at periodic solutions, where each airplane repeatedly follows the same sequence of flight legs~\cite{gopalan1998aircraft,Talluri2015}. Actually, they even focused on ``Maintenance Circuit Decompositions,'' which is a more demanding type of periodicity, where every two airplanes perform either the exact same cyclic sequence of flight legs, or completely disjoint cyclic sequences. In the present paper, we refer to such solutions as ``absolutely periodic.''  

There is no inherent reason to assume that periodic or even absolutely periodic solutions always exist. Some instances might appear infeasible with no periodic solutions available, even though non-periodic solutions exist. This leads to a natural question: does the existence of a non-periodic solution imply that a periodic or even absolute periodic one must also exist? We provide a complete answer for cases where airplanes require maintenance at most every $\gamma$ days, with $\gamma \leq 4$. Such a constraint is commonly encountered in practice: the value of $\gamma$ depends on the aircraft type and the airline, but in most cases, it is three or four, making our result particularly relevant~\cite{liang2015robust, Talluri2015}.

In the second version---which we call ``finite-horizon''---there is a finite collection of precisely scheduled flight legs%, which are not necessarily the same every day
, each to be performed only once. The question here is whether a feasible solution exists given the current fleet. This is arguably the most standard version of the aircraft routing problem. We prove that this version is \NP-complete when $\gamma \geq 4$, where $\gamma$ still stands for the maximum number of days without maintenance. This result is neither just another hardness result nor an anecdotal one: many operations research papers devoted to solving practically the aircraft routing problem justify their approaches by referencing its \NP-hardness; see, e.g.,~\cite{liang2009aircraft,yurek2024combinatorial}. However, as far as we know, the only existing hardness result in the literature is from Talluri~\cite{Talluri2015}, established for ``Maintenance Euler Tours,'' which is the ``periodic'' version introduced above with an added condition on another type of maintenance; see Section~\ref{subsec:period} for the exact definition.

\subsection{Periodicity}\label{subsec:period}

We introduce now more precisely the periodic version of the aircraft routing problem considered in this paper, which we call the \emph{periodic aircraft routing problem}. We start by defining it mathematically and we will then relate this definition with the concrete problem met in practice.

Assume given an Eulerian directed graph $D=(V,A)$, a subset $B\subseteq V$, and an integer $\gamma \geq 1$. Denote by $n$ the cardinality of $V$ and by $m$ that of $A$. The problem aims at finding a collection of $m$ semi-infinite walks $W_1,\ldots,W_m$ such that
\begin{itemize}
    \item each walk visits $B$ infinitely many times, with at most $\gamma$ arcs between two visits.
    \item the $j$th arcs of the walks are pairwise distinct and form together $A$, for all $j=1,2,\ldots$.
\end{itemize}
(A {\em semi-infinite} walk is a walk with infinitely many vertices, that has a first vertex but no last vertex.) Note that the assumption that $D$ is Eulerian is not restrictive: if there exists a solution, then necessarily the in- and outdegrees of each vertex are equal, and if the graph is not connected, we can work on  each component independently.

We relate now this mathematical problem with the real-life one. Each vertex is an airport. The vertices in $B$ are the bases. The number $\gamma$ is the maximal number of flying days with no maintenance. Each arc of the graph represents a sequence of flight legs that can be operated by an airplane over one day. Such sequences are usually called lines of flying (LOFs); see Gopalan and Talluri~\cite{gopalan1998aircraft} and Talluri~\cite{Talluri2015}. In a preliminary step, the flight legs are partitioned into LOFs, and the aircraft routing problem deals with LOFs rather than individual flight legs, resulting in a more tractable version of the problem. This version is called periodic because the instance is periodic in the following sense: these LOFs remain the same each day, and each must be performed daily by an airplane.

We turn now to the possible periodicity of the solutions, and we introduce the following definitions. A solution is \emph{periodic} if each $W_i$ is periodic. It is \emph{strongly periodic} if each $W_i$ is periodic and does not visit twice a same arc over the period. In other words, each $W_i$ consists in following a same closed directed trail repeatedly. It is \emph{absolutely periodic} if it is strongly periodic and every two $W_i$'s are either disjoint or identical. Only the notions of periodic and absolutely periodic solutions are considered in this paper. We have introduced the notion of strongly periodic solutions for the sake of consistency. See Figure~\ref{fig:periodicsol} for an illustration of these notions. Note, as also illustrated in the figure, the period of the $W_i$ might differ.

\begin{figure}[h]
\begin{center}
			\psscalebox{1.0 1.0} % Change this value to rescale the drawing.
			{
			\includegraphics{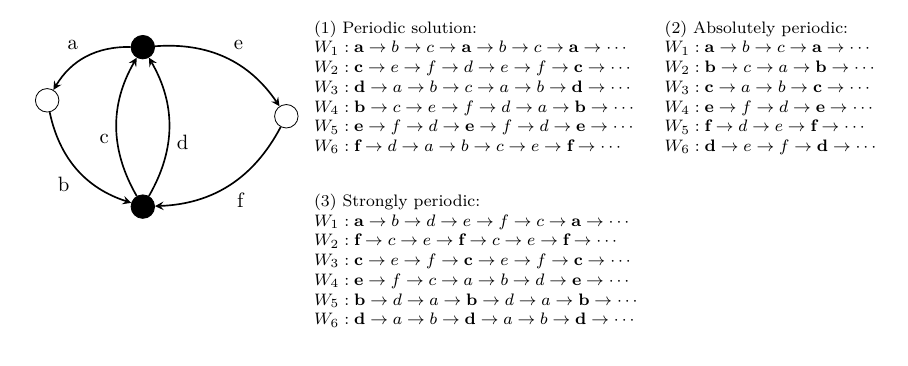}
			}
		\end{center}

\caption{On the left: an Eulerian directed graph $D$ with the black vertices representing the vertices of $B$. On the right: examples of periodic, absolutely periodic, and strongly periodic solutions. Bold vertices denote period start. (1) is not strongly periodic because $e$ is visited twice in every period of $W_2$. (3) is not absolutely periodic as $W_1$ and $W_2$ are neither disjoint nor identical.}
\label{fig:periodicsol}
\end{figure}

Periodic solutions exist as soon as the problem is feasible; see Section~\ref{sec:periodic}. All the question is about the existence of absolutely periodic solutions.

An absolutely periodic solution is one in which each LOF is always followed by the same LOF, regardless of the airplane or day. Such a solution consists of a collection of closed directed trails whose arc sets partition $A$, with each trail visiting $B$ at most every  $\gamma$ arcs. In the terminology of Talluri~\cite{Talluri2015}, absolutely periodic solutions correspond exactly to \emph{Maintenance Circuit Decompositions}. (He also introduces a more restrictive class of solutions, ``Maintenance Euler Tours,'' where all the $W_i$ are required to be identical, but even when strongly periodic solutions exist, such ``Maintenance Euler Tours'' do not necessarily exist; see~\cite[Figure 5]{Talluri2015}.) Talluri~\cite{Talluri2015} has shown how to decide in polynomial time the existence of a Maintenance Circuit Decomposition when $\gamma \leq 4$. However, his result does not show that the periodic aircraft routing problem is polynomial in that case since there is no a priori reason that a periodic solution exists when the problem is feasible. As noted earlier, no existing result in the literature establishes a direct link between the existence of feasible solutions to the existence of periodic solutions.

We provide the first result of this type.

\begin{theorem}\label{thm:k4}
Consider an instance of the periodic aircraft routing problem with $\gamma \leq 4$. If the problem is feasible, then there exists an absolutely periodic solution.
\end{theorem}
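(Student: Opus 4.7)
The first step of my plan is a reformulation. An absolutely periodic solution corresponds to a partition of the arc set $A$ into closed directed trails, each of which visits $B$ at most every $\gamma$ arcs. Indeed, given such a partition $\{T_1, \ldots, T_s\}$, one recovers an absolutely periodic solution by assigning to each trail $T_j$ the $|T_j|$ airplanes that traverse $T_j$ with pairwise distinct starting offsets (this uses $\sum |T_j| = |A| = m$); conversely, the trails underlying the airplane periods of any absolutely periodic solution form such a partition. The theorem thus reduces to: if the problem is feasible and $\gamma \leq 4$, then $A$ admits a partition into closed trails each visiting $B$ every $\leq \gamma$ arcs.

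To construct such a partition I would start from a periodic solution---whose existence under feasibility is provided by the result of Section~\ref{sec:periodic}---and encode it through clocks. The \emph{clock} $c(a) \in \{0, 1, \ldots, \gamma - 1\}$ of an arc occurrence $a$ is the number of arcs traversed by the walk since its last visit to $B$, measured at the tail of $a$. Clocks satisfy three local constraints (arcs with tail in $B$ have clock $0$; arcs with tail outside $B$ have clock in $\{1, \ldots, \gamma - 1\}$; arcs with clock $\gamma - 1$ have head in $B$) and a conservation law at every non-base vertex $v$ (for each $c \in \{0, \ldots, \gamma - 2\}$, the number of in-arcs at clock $c$ equals the number of out-arcs at clock $c + 1$). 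From a periodic solution of common period $P$ one extracts the counts $x_a^c \in \mathbb{Z}_{\geq 0}$ satisfying $\sum_c x_a^c = P$, giving after normalization a fractional $\{0,1\}$-clock assignment respecting every local constraint and conservation equation. An arc partition of the desired type corresponds precisely to an \emph{integer} clock assignment $y_a^c \in \{0,1\}$ with $\sum_c y_a^c = 1$ satisfying the same constraints: the per-vertex pairing of in-arcs with out-arcs of matching clocks then yields the routing, and each orbit of this routing is a closed trail visiting $B$ every $\leq \gamma$ arcs.

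The remaining step is the integerization of the fractional assignment, where the hypothesis $\gamma \leq 4$ enters crucially: it bounds the number of active clock classes at any non-base vertex to at most three. For $\gamma \in \{1,2\}$ the argument is essentially immediate (for $\gamma = 2$, feasibility forces the absence of any arc between two non-base vertices, and any closed-trail decomposition of the Eulerian graph $D$ is then an MCD). For $\gamma \in \{3, 4\}$, I plan to use a Hall-type counting argument at each non-base vertex $v$, showing that the fractional feasibility supplies the matching conditions required to pick, at $v$, the requisite number of out-arcs in each clock class with head in the appropriate set. \textbf{The main obstacle} is $\gamma = 4$: with three clock classes simultaneously active at non-base vertices, the local integer choices interact across adjacent vertices and a purely local matching is not enough. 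My plan for this case is to formulate the integerization as a feasibility problem on the clock-expanded digraph $\tilde D$ with vertex set $\{(v,0) : v \in B\} \cup \{(v,c) : v \notin B,\ c \in \{1, \ldots, \gamma - 1\}\}$ and arcs encoding admissible clock transitions, and then either to exploit total unimodularity of the constraint matrix (made plausible by $\gamma \leq 4$) or to devise a cycle-cancelling exchange along directed cycles of $\tilde D$ that iteratively reduces the fractionality while preserving the conservation. The $\gamma \leq 4$ hypothesis is presumably tight: a fourth active clock class at non-base vertices should allow fractional-only configurations with no integer counterpart, paralleling the NP-hardness phenomena the paper establishes in the finite-horizon setting.
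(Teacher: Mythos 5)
Your reformulation (absolutely periodic solutions $\leftrightarrow$ partitions of $A$ into closed trails hitting $B$ at most every $\gamma$ arcs), your extraction of a fractional ``clock'' assignment, and your treatment of $\gamma\leq 3$ all match the paper's strategy in substance; the paper obtains the same fractional quantities $f_a^q$ as Cesàro limits along a subsequence of an arbitrary feasible solution rather than by averaging a periodic one over its period, but the two derivations are interchangeable (yours is arguably cleaner, since it avoids the $O(m/j)$ error terms). The problem is that for $\gamma=4$ — the only genuinely hard case — you have not given a proof. You explicitly leave the integerization of the fractional clock assignment unresolved, offering two candidate techniques (total unimodularity of the clock-expanded system, or a cycle-cancelling exchange) without establishing either. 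The constraint matrix of your full system couples network-flow conservation rows on $\tilde D$ with one ``choose exactly one lift per arc'' row for each arc of $D$; matrices of this form (a network matrix augmented with generalized upper-bound rows) are not totally unimodular in general, so the TU claim cannot be waved through as ``plausible,'' and the cycle-cancelling alternative is not described in enough detail to check that an exchange step always exists and preserves all constraints. Since this is precisely where the hypothesis $\gamma\leq 4$ must do its work, the proposal is a plan rather than a proof.

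The paper's resolution is worth contrasting with your plan, because it shows what is missing. Rather than integerizing the full clock assignment, the paper integerizes only a \emph{projection} of it: for each arc $a$ with both endpoints outside $B$ (a binary clock choice, $q\in\{2,3\}$ in the paper's indexing), one records whether it is ``charged'' to the in-side or the out-side of its non-base endpoints, and one keeps only covering \emph{inequalities} $\geq k_v$ at the vertices $v^{\text{i}},v^{\text{o}}$ of the bipartite graph $H$ — discarding the exact conservation equalities and all clock information on arcs incident to $B$. The resulting system is a degree-constrained subgraph problem on a bipartite graph, whose constraint matrix is genuinely TU, so the fractional solution built from $f_a^2,f_a^3$ rounds to an integer cover $F$. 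The price is that one must then prove separately that such a weak combinatorial certificate can always be expanded back into a full trail decomposition respecting the $\gamma=4$ bound; this is the `if' direction of Lemma~\ref{lem:bip}, proved by a careful greedy pairing of arcs around each non-base vertex followed by a distance-from-$B$ verification. That reconstruction lemma is the ingredient your proposal lacks: you either need it (or an analogue), or you need an actual proof that your richer clock polytope contains an integer point whenever it is nonempty. Until one of these is supplied, the $\gamma=4$ case — and hence the theorem — remains unproved. (Your closing speculation that $\gamma\leq4$ is tight is also unsupported; the paper leaves $\gamma\geq 5$ open rather than claiming a counterexample.)
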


In particular, this implies that the periodic aircraft routing problem is polynomial when $\gamma \leq 4$. What happens when $\gamma \geq 5$ is an open question.

\subsection{Finite-horizon}
In this section, we introduce the \emph{finite-horizon aircraft routing problem}, for which we will establish our hardness result. This version involves a finite collection of flight legs. Each flight leg is defined by a specific day within the planning horizon, a departure and arrival airport, and a departure and arrival time, and it must be performed exactly once. This version differs from the periodic one, where the same flight schedule repeats every day. The input of the problem is formed by
\begin{itemize}
    \item a set $\airports$ of \emph{airports}, a subset $B$ of which being \emph{bases},
    \item for each base in $B$, a time interval over which maintenance can be performed, and the duration of a typical maintenance at this base,
    %\item a time horizon, determined by a start time and an end time,
    \item a collection $L$ of \emph{flight legs}, each of them being characterized by departure and arrival airports, and by departure and arrival times (included in the horizon),
    \item a number $n_{\alpha}$ of \emph{airplanes} in each airport $\alpha$ at the beginning of the horizon, and
    \item a maximal number $\gamma$ of days with no maintenance.
\end{itemize}
The output is a partition of $L$ into $n$ \emph{routes}, a route being a sequence of flight legs that can be ``realized'': two flight legs can be chained if the first arrives in an airport from which the second departs, and the arrival time of the first is before the departure time of the second. The routes in the partition must moreover satisfy the maintenance constraint: an airplane must undergo a maintenance at most every $\gamma$ days.

To ease the definition of the problem, we assume that all times are given in a same time zone, typically the local time of the airline.

Note that without the maintenance constraint, the problem is polynomial since it can be modeled as a flow problem. We show in Section~\ref{sec:AR} that adding this constraint makes the problem difficult. As mentioned above, the only hardness result about the aircraft routing problem available in the literature is about the existence of Maintenance Euler Tours, which is a variation of the periodic aircraft routing problem.

\begin{theorem}\label{thm:ARP}
The finite-horizon aircraft routing problem is \NP-complete, even for a fixed $\gamma \geq 4$.
\end{theorem}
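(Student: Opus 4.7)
Membership in NP is routine: a partition of $L$ into routes is a polynomial-size certificate, and verifying legality---that each leg appears exactly once, that consecutive legs in each route chain correctly in airport and time, that the number of routes starting at each airport $\alpha$ is at most $n_\alpha$, and that a base visit occurs within every window of $\gamma$ consecutive days along each route---is straightforward.

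For hardness, I would reduce from a classical NP-complete problem (3-SAT being the most natural candidate) via the intermediate device suggested by the paper's keywords, namely a pebbling game. The idea is to formulate an auxiliary combinatorial problem in which pebbles (playing the role of airplanes) traverse a layered graph whose layers correspond to days, with designated base-vertices each pebble must touch within every sliding window of $\gamma$ layers. This strips away the real-valued time axis and isolates the structural core of the hardness; once the pebbling game is shown NP-hard, turning an instance of it into a finite-horizon aircraft routing instance amounts to assigning departure and arrival times to the arcs so that arc-availability in the game coincides exactly with leg-chainability in the schedule.

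For the pebbling-game hardness, I would design three families of gadgets. A \emph{variable gadget}, one per Boolean variable, admitting exactly two feasible pebble-flow patterns that correspond to the two truth values and are pinned in place by the maintenance constraint; a \emph{clause gadget}, one per clause, that can be serviced only when at least one of its literals sends through a ``true''-labeled pebble stream; and \emph{connecting gadgets} that transport the literal choices from variable gadgets to clause gadgets while respecting the $\gamma$-day constraint. The fixed value $\gamma \geq 4$ should give enough slack to realise a genuine two-way choice inside a variable gadget while remaining tight enough that any deviation from one of the two intended patterns forces a maintenance violation somewhere.

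The main obstacle, as is typical for reductions of this flavour, is the soundness direction: showing that \emph{every} feasible routing induces a satisfying assignment, with no unintended chaining nor maintenance-evading ``shortcut'' available. This requires a careful choice of flight times and base opening windows so that the only globally consistent way to simultaneously meet the maintenance constraint on all routes is to mimic the encoding of a truth assignment, and it is where I expect the proof to concentrate its effort.
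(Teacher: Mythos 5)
Your membership-in-\NP{} argument is fine. The hardness direction, however, is a plan rather than a proof: none of the variable, clause, or connecting gadgets is actually constructed, and you yourself defer the soundness argument (``every feasible routing induces a satisfying assignment'') to future effort. That is exactly the part that cannot be waved at. It is far from clear that the maintenance constraint alone can pin a variable gadget to exactly two feasible chaining patterns and then propagate that Boolean choice along ``wires'' to clause gadgets: the constraint is a sliding-window condition on each route taken individually, and it offers no obvious mechanism for creating the rigid, mutually exclusive alternatives a 3-SAT reduction needs. Until such gadgets are exhibited and their soundness proved, there is no reduction, so the proposal as written leaves the essential content of the theorem unestablished.

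The paper takes a different and much lighter route that avoids gadget engineering altogether. It first isolates a combinatorial core, the \emph{constrained path partition problem} on a layered acyclic digraph (partition the arcs into $S$-$T$ paths, each of which must traverse an arc of a distinguished set $M$ after at most $\gamma-1$ traversals of $N\setminus M$), and shows in Lemma~\ref{lem:XXXtorouting} that this reduces to the finite-horizon aircraft routing problem simply by assigning days and times to the layers. Hardness of the core problem (Proposition~\ref{prop:equi-diff}) is then obtained by reduction from the \emph{two-commodity arc-disjoint path problem} in acyclic digraphs, which is already known to be \NP-complete: short entry and exit paths are attached to the sources and sinks, and $M$ is chosen so that, with $\gamma=4$, a path entering through the $s_1$-attachment can meet its maintenance requirement only by exiting through the $t_1$-attachment, and similarly for $s_2,t_2$. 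The ``hard choice'' thus lives entirely in the source problem, and the maintenance constraint is used only to enforce the commodity pairing. Note finally that the pebbling game you invoke is used in the paper for the \emph{tractability} result (the quiet night version with a fixed number of airplanes), not for the hardness proof, so it is not the intermediate device you were guessing at.
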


The proof reduces a problem of partitioning the arcs of a directed graph into disjoint paths---which is known to be \NP-complete---to the finite-horizon aircraft routing problem. 

The case $\gamma \leq 3$ is not discussed and this could be an interesting theoretical question to address. However, for the ``quiet night version''---a special case of the finite-horizon aircraft routing problem introduced in Section~\ref{subsec:polynomial}---the case $\gamma=1$ is obviously polynomial. Furthermore, the quiet night version is polynomial when the number of airplanes is fixed; see Section~\ref{subsec:polynomial}

\section{Periodicity}\label{sec:periodic}

As mentioned in the introduction, the existence of periodic solutions is not the issue.

\begin{proposition}\label{prop:periodic}
Consider an instance of the periodic aircraft routing problem. If the problem is feasible, then there exists a periodic solution.
\end{proposition}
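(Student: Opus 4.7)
My plan is to prove this by a pigeonhole argument on a finite ``state'' of the system at each time step. For a feasible solution $W_1, \ldots, W_m$, I would encode the state at time $j$ as
\[
\sigma_j \;=\; \bigl( v_i(j),\, c_i(j) \bigr)_{i=1}^m,
\]
where $v_i(j) \in V$ is the tail of the $j$-th arc of $W_i$ and $c_i(j) \in \{0, 1, \ldots, \gamma\}$ is a maintenance clock---for concreteness, the number of arcs remaining before $W_i$'s next visit to $B$. The maintenance constraint guarantees that $c_i(j) \leq \gamma$ at every time (possibly after a short initial transient), so the state space is finite, of size at most $\bigl(n(\gamma+1)\bigr)^m$.

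Next I would apply the pigeonhole principle to produce two times $j_1 < j_2$ with $\sigma_{j_1} = \sigma_{j_2}$. Using this, I would define a new collection of walks $W'_1, \ldots, W'_m$ by letting $W'_i$ coincide with $W_i$ on the first $j_1$ arcs and, beyond time $j_1$, repeat indefinitely the length-$(j_2-j_1)$ segment of $W_i$ traversed between times $j_1$ and $j_2-1$. The equality of vertex positions $v_i(j_1) = v_i(j_2)$ is exactly what makes each $W'_i$ a well-defined semi-infinite walk, periodic from time $j_1$ onward.

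I would then verify the two ``global'' conditions for $W'_1, \ldots, W'_m$. That the $j$-th arcs are pairwise distinct and together form $A$ at every $j$ is inherited from $W_1, \ldots, W_m$: all walks undergo the same cyclic shift, so for $j > j_1$ the $j$-th arc of $W'_i$ is the $k$-th arc of $W_i$ for a single $k \in \{j_1+1, \ldots, j_2\}$ common to every $i$.

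The main obstacle is the per-airplane maintenance constraint. Inside a single period, the visits of $W'_i$ to $B$ coincide with the visits of $W_i$ on $[j_1, j_2-1]$, so gaps there are at most $\gamma$ for free. The delicate point is the ``wrap-around'' gap, from the last visit to $B$ in one period to the first visit in the next. To handle it, I would use the equality of clocks $c_i(j_1) = c_i(j_2)$ to show that this wrap-around gap is equal, in $W_i$ itself, to the gap between the last visit of $W_i$ to $B$ strictly before time $j_2$ and the first visit at time at least $j_2$. Since $W_i$ satisfies the maintenance constraint, this latter gap is at most $\gamma$, which closes the argument.
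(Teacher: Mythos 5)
Your proof is correct and essentially identical to the paper's: both record, for each walk at each time step, its current position together with a bounded maintenance clock, apply the pigeonhole principle to find two times $j_1<j_2$ with the same global state, and repeat the intervening segment. The only cosmetic differences are that the paper discards the prefix before $j_1$ so the walks are periodic from the start (rather than merely eventually periodic, which is what the literal definition requires), and it uses a backward-looking clock (arcs since the last visit to $B$) where you use a forward-looking one; both variants close the wrap-around gap in the same way.
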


\begin{proof}
Suppose there exists a solution $W_1,\ldots,W_m$. Denote by $a_{ij}$ the $j$th arc visited by $W_i$. Just before $W_i$ visits its $j$th arc, count the number of arcs traversed by $W_i$ since its last visit to a vertex in $B$, and denote this number by $k_{ij}$. There are finitely many possible tuples $\bigl((a_{1j},k_{1j}),\ldots,(a_{mj},k_{mj})\bigl)$. Thus, there exist $j_1 < j_2$ leading to the same tuple. A periodic solution $W'_1,\ldots,W'_m$ is obtained by defining $W'_i$ as the repeated copies of the sequence $a_{ij_1},a_{i(j_1+1)},\ldots,a_{ij_2-1}$.
\end{proof}

The proof only provides a rough bound of $m!\gamma^m$ on the period, leaving open the existence of periodic solutions with a polynomial description for all $\gamma$. (Note that Theorem~\ref{thm:k4} ensures the existence of periodic solutions with a polynomial description for $\gamma \leq 4$.)

\begin{figure}[h]
\begin{center}
			\psscalebox{1.0 1.0} % Change this value to rescale the drawing.
			{
			\includegraphics{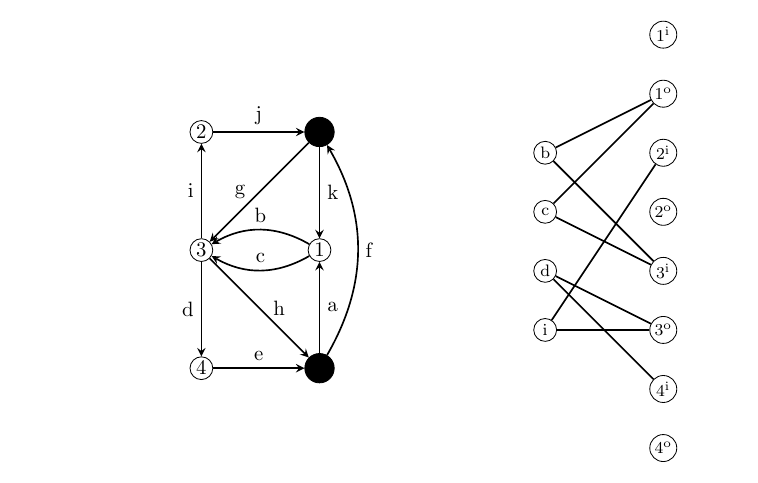}
			}
		\end{center}
\caption{On the left, an instance of the periodic aircraft routing problem; on the right, the corresponding graph $H$.\label{fig:H}}

\end{figure}

For each instance of the periodic aircraft routing problem, we build a bipartite graph $H$, which will be used at the very end of the proof of Theorem~\ref{thm:k4}, for the case $\gamma=4$. This graph provides a (polynomial) characterization of the existence of absolutely periodic solutions in that case; see Lemma~\ref{lem:bip} below. We describe now the construction. Write $H = (U, E)$, with bipartition $U = S \cup T$. The sets $S,T,E$ are now defined as follows. We set $S \coloneqq A \setminus \delta(B)$, meaning $S$ contains a vertex for each arc of $D$ that is not incident to a vertex in $B$. For each $v \in V \setminus B$, we introduce two vertices $v^{\text{i}}$ and $v^{\text{o}}$ in $T$, so that $T \coloneqq \{ v^{\text{i}}, v^{\text{o}} \colon v \in V \setminus B \}$. The edge set $E$ is defined such that an edge connects $a \in S$ to $v^{\text{i}} \in T$ if arc $a$ enters $v$ in $D$, and to $v^{\text{o}} \in T$ if arc $a$ leaves $v$ in $D$. The construction of $H$ is illustrated on Figure~\ref{fig:H}.

For each vertex $v \in V \setminus B$, let $b_v^-$ be the number of arcs from $B$ to $v$, let $b_v^+$ be the number of arcs from $v$ to $B$, and define $k_v \coloneqq \max(0, \deg^+(v) - b_v^- - b_v^+)$.

\begin{lemma}\label{lem:bip}
Consider an instance of the periodic aircraft routing problem and suppose that $\gamma =4$. There is an absolutely periodic solution if and only if there exists $F \subseteq E$ covering in $H$ each vertex of $S$ at most once and each vertex $v^{\text{\textup i}}$ and each vertex $v^{\text{\textup o}}$ at least $k_v$ times.
\end{lemma}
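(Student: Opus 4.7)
My plan is to identify valid subsets $F$ with type-labelings of the $S$-arcs in an absolutely periodic solution. When $\gamma=4$, producing an absolutely periodic solution amounts to partitioning $A$ into \emph{segments}, i.e., walks from $B$ to $B$ of length $1$, $2$, $3$, or $4$ with no intermediate vertex in $B$: any such partition extends to an absolutely periodic solution by arbitrarily chaining the segments at each base (feasible by the Eulerian assumption on $D$), and conversely every absolutely periodic solution restricts to one such partition. Within the segment partition, each $S$-arc $a=(u,w)$ plays exactly one of three roles: \emph{type~1}, the unique $S$-arc of a 3-segment $b\to u\to w\to b'$; \emph{type~2}, the first $S$-arc of a 4-segment $b\to u\to w\to y\to b'$; or \emph{type~3}, the second $S$-arc of a 4-segment $b\to x\to u\to w\to b'$. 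The encoding with $F$ is: put $(a,w^{\text{i}})$ in $F$ when $a$ is of type~2, $(a,u^{\text{o}})$ in $F$ when $a$ is of type~3, and leave $a$ uncovered otherwise, which makes the ``at most once'' condition on each $S$-vertex immediate.

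For the forward direction, given an absolutely periodic solution, define $F$ as above. At each $v\in V\setminus B$, the edges of $F$ at $v^{\text{i}}$ (resp.\ at $v^{\text{o}}$) are in bijection with the type-$2$ arcs entering $v$ (resp.\ type-$3$ arcs leaving $v$), and both counts equal the number of 4-segments whose middle vertex is $v$. To lower bound this common count by $k_v$, observe that each outgoing $S$-arc at $v$ is preceded in its segment by either a $B$-to-$v$ arc (forcing type~$1$ or~$2$) or another $S$-arc (forcing type~$3$). Since only $b_v^-$ $B$-to-$v$ arcs exist, at most $b_v^-$ outgoing $S$-arcs at $v$ can be of type~$1$ or $2$; among the $\deg^+(v)-b_v^+$ outgoing $S$-arcs at $v$, at least $\deg^+(v)-b_v^+-b_v^-$ must then be of type~$3$, and hence at least $k_v$.

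For the converse direction, given a valid $F$, first prune it so that for every $v\in V\setminus B$ the numbers of edges at $v^{\text{i}}$ and at $v^{\text{o}}$ are equal: drop edges from whichever side is larger until equality is reached, which keeps both counts at least $k_v$ since each removal decreases only one count and both sides started at least $k_v$. Then, for each $v\in V\setminus B$, pair arcs locally: (i) any bijection between the $S$-arcs at $v^{\text{i}}$ in $F$ and the $S$-arcs at $v^{\text{o}}$ in $F$, yielding the two middle arcs of each 4-segment through $v$; (ii) the remaining incoming $S$-arcs at $v$ paired with outgoing $v$-to-$B$ arcs (closing 3- or 4-segments); (iii) the remaining outgoing $S$-arcs at $v$ paired with incoming $B$-to-$v$ arcs (opening 3- or 4-segments); and (iv) the leftover $B$-to-$v$ and $v$-to-$B$ arcs paired among themselves into 2-segments. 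Chaining the resulting segments at bases (again feasible by Eulerianity) yields an absolutely periodic solution. The subtle step --- and the main obstacle --- is the simultaneous feasibility of (ii) and (iii) at each $v$: pairing (iii) needs $b_v^-\geq\deg^+(v)-b_v^+-|F\text{ at }v^{\text{o}}|$ and pairing (ii) needs $b_v^+\geq\deg^-(v)-b_v^--|F\text{ at }v^{\text{i}}|$; thanks to the Eulerian identity $\deg^+(v)=\deg^-(v)$, these two requirements collapse to the single condition $|F\text{ at }v^{\text{i}}|,|F\text{ at }v^{\text{o}}|\geq\max(0,\deg^+(v)-b_v^--b_v^+)=k_v$, which is exactly what the lemma hypothesizes.
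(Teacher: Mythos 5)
Your proof is correct and, for the `if' direction (the only one the paper proves in full), follows essentially the same route as the paper: equalize/prune $F$ at each $v^{\text{i}},v^{\text{o}}$, pair the $F$-covered incoming and outgoing arcs at each $v\notin B$ into the middle two arcs of 4-segments, absorb the remaining $S$-arcs with base arcs using $k_v\geq \deg^+(v)-b_v^--b_v^+$ together with Eulerianity, and chain the resulting $B$-to-$B$ segments into closed trails. You additionally supply a proof of the `only if' direction (counting the type-3 outgoing arcs at each $v$), which the paper omits and delegates to Talluri, so your write-up actually covers slightly more than the paper's.
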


The special case $|B|=1$ has been established by Talluri~\cite[Proposition 3.1]{Talluri2015}, who somehow explains in the paragraph following its proof that the general case holds as well. To establish Theorem~\ref{thm:k4}, we need only the `if' direction, for which we provide a complete proof.% for sake of completeness.

\begin{proof}[Proof of Lemma~\ref{lem:bip}, `if' direction]
Assume there is $F \subseteq E$ covering in $H$ each vertex of $S$ at most once and each vertex $v^{\text{\textup i}}$ and each vertex $v^{\text{\textup o}}$ at least $k_v$ times. Remove edges from $F$ (still calling this set $F$) so that each vertex $v^{\text{\textup i}}$ and each vertex $v^{\text{\textup o}}$ is covered exactly $k_v$ times by $F$.

Let $v$ be a vertex in $V \setminus B$. Denote by $A_{F,v}^+$ the arcs leaving $v$ and assigned to $v^{\text{o}}$ by $F$ and by $A_{F,v}^-$ the arcs entering $v$ and assigned to $v^{\text{i}}$ by $F$. Notice that $|A_{F,v}^+|=|A_{F,v}^-|= k_v$.

We describe now a process forming pairs of consecutive arcs. These pairs will not be necessarily disjoint, but they will cover all $A \setminus A[B]$ (where $A[B]$ denotes the arcs with both endpoints in $B$). We start with the arcs from the sets $A_{F,v}^+$ and $A_{F,v}^-$: for each vertex $v \in V \setminus B$, we greedily form disjoint pairs
with one element from $A_{F,v}^+$ and one element from $A_{F,v}^-$. Then, for each vertex $v \in V \setminus B$, we greedily form disjoint pairs with one element from $\delta^+(v) \setminus (A_{F,v}^+ \cup \delta^-(B))$ and one element from $B$ to $v$: this is possible because $\deg^+(v) - b_v^+ - |A_{F,v}^+| \leq \deg^+(v) - b_v^+ - k_v \leq b_v^-$ by the expression of $k_v$. Similarly, for each vertex $v \in V \setminus B$, we greedily form disjoint pairs with one element from $\delta^-(v) \setminus (A_{F,v}^- \cup \delta^+(B))$ and one element from $v$ to $B$: this is possible because $\deg^-(v) - b_v^- - |A_{F,v}^-| \leq \deg^+(v) - b_v^- - k_v\leq b_v^+$, again by the expression of $k_v$ and because $D$ is Eulerian. Finally, for each vertex $v \in V \setminus B$, we consider the arcs from $B$ to $v$ that have not been paired yet, and the arcs from $v$ to $B$ that have not been paired yet. Then, we greedily form disjoint pairs of one element from the first set and one element from the second set. This is possible since the graph $D$ is Eulerian. 

After this process, only the arcs in $A[B]$ have not been paired. The process above ensures the following for every arc in $A \setminus A[B]$:
\begin{itemize}
    \item If it belongs to $\delta^+(B)$, then it is paired with a unique arc leaving its head.
    \item If it belongs to  $\delta^-(B)$, then it is paired with a unique arc entering its tail.
    \item Otherwise, it is paired with a unique arc leaving its head, and also with a unique arc entering its tail.
\end{itemize}
The pairs form thus pairwise arc-disjoint directed trails from $B$ to $B$. Completing greedily these directed paths with arcs in $A[B]$ we get a partition of the arcs of $D$ into pairwise arc-disjoint closed directed trails. (Here, we use again that $D$ is Eulerian.)

To ease the discussion, we say that an arc of $D$ is at distance $q$ from $B$ if its tail is reached from $B$, on the trail the arcs belongs to, after exactly $q$ arcs. To finish the proof, we check that this distance is at most $3$ for every arc in $D$

The arcs leaving $B$ are at distance $0$. The arcs in $A_{F,v}^-$ for some $v$ are at distance $1$: indeed, each vertex of $S$ is covered at most once in $H$; hence, such an arc is not in $A_{F,u}^+$, where $u$ is its tail, and by the construction, it is preceded by an arc leaving $B$. The arcs in $A_{F,v}^+$ for some $v$ are at distance $2$ because they are preceded by an arc in $A_{F,v}^-$. Consider now an arc with no endpoints in $B$ and neither in some $A_{F,v}^-$, nor in some $A_{F,v}^+$. By the construction, it is preceded by an arc coming from $B$, and thus is at distance $1$. Finally, we are left with arcs entering $B$. By the construction, such arcs are preceded by an arc $a$ in a situation already dealt with, i.e., an arc $a$ at distance at most $2$, and we are done.
\end{proof}

We deal now with the proof of Theorem~\ref{thm:k4}. We deal successively with the different values of $\gamma$.

\begin{proof}[Proof of Theorem~\ref{thm:k4}, case $\gamma=1$]
     Suppose there exists a solution. Then $V=B$. Since $D$ is Eulerian, it can be decomposed into closed directed trails forming a partition of $A$, and whose vertices are all in $B$.
\end{proof}

\begin{proof}[Proof of Theorem~\ref{thm:k4}, case $\gamma=2$]
    Suppose there exists a solution. Then there is no arc between two vertices in $V\setminus B$. Since $D$ is Eulerian, it can be decomposed into closed directed trails forming a partition of $A$, and such that every visit to a vertex in $V\setminus B$ is followed by a visit to a vertex in $B$.
\end{proof}

\begin{proof}[Proof of Theorem~\ref{thm:k4}, case $\gamma=3$]
    Suppose there exists a solution. Denote by $A'$ the arcs whose endpoints are both in $V\setminus B$. Since there exists a solution, we have 
    \[
    b_v^- \geq |\delta^+(v) \cap A'| \quad \text{and} \quad  b_v^+ \geq |\delta^-(v) \cap A'|
    \]
    for every vertex $v \in V \setminus B$. Pair each arc $a \in A'$ arbitrarily with an arc from $B$ to its tail and with an arc from its head to $B$. By the inequalities above, each selected arc from $B$ or to $B$ is distinct. This process creates paths of length $3$. We remove all arcs from these paths, and replace each such path with an arc from the origin of the path (in $B$) to the end of the path (in $B$ as well). In the new graph, the in- and outdegrees of each vertex are equal because $D$ is Eulerian. This implies that the new graph can be decomposed into closed trails forming a partition of its arcs, which in turn decompose $D$ into closed trails forming a partition of $A$, such that each arc in $A'$ is preceded with an arc from $B$ and followed with an arc to $B$.
\end{proof}

\begin{proof}[Proof of Theorem~\ref{thm:k4}, case $\gamma=4$]
    Consider a solution $W_1,\ldots,W_m$  of the periodic aircraft routing problem. %We introduce now for each arc $a$ and each integer $1 \leq q \leq 4$ a quantity $f_a^q$. Roughly speaking, it is defined as the proportion of time the arc $a$ is visited by a walk at distance $q$ from the last visited vertex of $B$. To do this formally, we 
    Write each walk $W_i$ explicitly as $W_i = v_0^i,a_1^i,v_1^i,a_2^i,\ldots$, and define
    for each $i \in [m]$ and each positive integer $j$, the quantities
    \[
    q_j^i \coloneqq \left \{ \begin{array}{ll} j & \text{if $W_i$ restricted to the first $j$ arcs does not visit $B$,} \\[1.3ex] \min\{\ell \in [j] \colon v_{j-\ell}^i \in B\} & \text{otherwise,}  \end{array}\right.
    \]
    and
    \[
    f_{a,j}^q \coloneqq \frac 1 j \sum_{i=1}^m \sum_{\ell=1}^j \un(a_\ell^i = a) \un(q_\ell^i = q ) \, ,
    \]
    where $\un$ denote the indicator function.

    The quantity $q_j^i$ represents the number of arcs traversed since the last visit to $B$ when restricting $W_i$ to its first $j$ arcs. The definition of $q_j^i$ is adjusted to also cover the initial segment of $W_i$, when there has not been any visit to $B$ yet. (The exact definition of this adjustment does not really matter since the argument will rely on some asymptotic discussion.) The quantity $f_{a,j}^q$ is the frequency at which arc $a$ has been visited when the walks are restricted to their first $j$ arcs, distinguished according to the number $q$ of arcs since last visit to $B$.

    By definition, the following holds:
    \begin{enumerate}[label=(\roman*)]
        \item \label{positive} $f_{a,j}^1, f_{a,j}^2, f_{a,j}^3, f_{a,j}^4 \geq 0$ for every $a$ and $j$.\vspace{1mm}
        \item \label{un} $f_{a,j}^1 + f_{a,j}^2 + f_{a,j}^3 + f_{a,j}^4 = 1$ for every $a$ and $j$.
        \item \label{flow} $\ds{\sum_{a\in \delta^+(v)}f_{a,j}^3 \geq \sum_{a \in \delta^-(v)} f_{a,j}^2 - \frac m j \deg^-(v)}$ for every $v \notin B$ and $j$.
        \item \label{flow-bis} $\ds{\sum_{a \in \delta^-(v)} f_{a,j}^2 = \deg^-(v) - \sum_{a\in\delta^-(v)}f_{a,j}^1 - \sum_{a \in \delta^-(v)}f_{a,j}^3}$ for every $v \notin B$ and $j$.
        \item \label{outbasis} $\ds{\sum_{a\in\delta^-(v)}f_{a,j}^1 = \frac 1 j \deg^-(v) + \frac {j-1} j b_v^-}$ for every $v \notin B$ and $j$.
        \item \label{tobasis} $\ds{\sum_{a\in \delta^-(v)}f_{a,j}^3 \leq b_v^+}$ for every $v \notin B$ and $j$.
    \end{enumerate}
    Checking these relations is cumbersome but not difficult from the definition and using the fact that, for every $j$, each arc of the graph is the $j$th arc of exactly one walk.
    
    Choose arbitrarily an infinite increasing sequence $j_1 < j_2 < \cdots $ of positive integers making each $f_{a,j}^q$ converging. Such a sequence exists by compactness. Denote then $f_a^q = \lim_{\ell\to+\infty}f_{a,j_{\ell}}^q$. We have then for every vertex $v \notin B$
    \[
    \sum_{a\in \delta^+(v)}f_a^3 \geq \sum_{a \in \delta^-(v)} f_a^2 = \deg^-(v) - \sum_{a\in\delta^-(v)}f_a^1 - \sum_{a \in \delta^-(v)}f_a^3 \geq \deg^-(v) - b_v^- - b_v^+ \, .
    \]
    The first inequality, the equality, and the second inequality are respectively consequences of item~\ref{flow}, item~\ref{flow-bis}, and items~\ref{outbasis} and~\ref{tobasis} jointly.    Hence,
    \begin{equation}\label{eq:freq32}
    \sum_{a\in \delta^+(v)}f_a^3 \geq \sum_{a \in \delta^-(v)} f_a^2 \geq k_v \, .
    \end{equation}
    For each edge $e$ of $H$, define $x_e$ as follows. If $e$ is of the form $av^{\text{i}}$ with $a \in A$ and $v$ the head of $a$, set $x_e = f_a^2$. If $e$ is of the form $av^{\text{o}}$ with $a \in A$ and $v$ the tail of $a$, set $x_e = f_a^3$. The vector $x$ is a feasible solution of the following system of linear inequalities:
    \[
    \begin{array}{ll}
        \displaystyle{\sum_{e\in\delta(s)} x_e \leq 1} & \forall s \in S  \\[1.5ex]
         \displaystyle{\sum_{e\in\delta(t)} x_e \geq k_v} & \forall t \in T, \text{with $v$ such that } t = v^{\text{i}} \text{ or } t = v^{\text{o}}  \\
         x_e \geq 0 & \forall e \in E \, ,
    \end{array}
    \]
    where the constraints come respectively from items~\ref{un},~\eqref{eq:freq32}, and~\ref{positive}. The matrix of the constraints being totally unimodular, and the system being feasible, there exists a feasible integer solution. This integer solution describes a subset $F \subseteq E$ covering in $H$ each vertex of $S$ at most once and each vertex $v^{\text{\textup i}}$ and each vertex $v^{\text{\textup o}}$ at least $k_v$ times. By Lemma~\ref{lem:bip}, there is an absolutely periodic solution.
\end{proof}

\section{Complexity of the finite-horizon version}\label{sec:AR}

\subsection{An intermediate graph problem}

We introduce in this subsection a pure combinatorial problem on a graph, which will be useful to address the complexity of the finite-horizon aircraft routing problem. Indeed, as we will see, this combinatorial problem can be reduced in polynomial time to the finite-horizon aircraft routing problem, which will help show the hardness of the latter; we will also show that the finite-horizon aircraft routing problem can be reduced  in polynomial time to this combinatorial problem, which will help get some tractability result.

The notation used in this section is independent of the notation of Section~\ref{sec:periodic}.

Consider a directed acyclic graph $D=(V,A)$ whose vertices are of three types:
\begin{itemize}
    \item {\em sources}, denoted by $S$, which are vertices with indegree $0$ and outdegree $1$.
    \item {\em sinks}, denoted by $T$, which are vertices with indegree $1$ and outdegree $0$.
    \item other vertices, whose in- and outdegree are equal.
\end{itemize}
We are given $m$ subsets of vertices $X_1,\ldots,X_m$ such that $T \subset X_m \subset X_{m-1} \subset \cdots \subset X_1 \subset V$, such that $S \subseteq V \setminus X_1$, and such that the outneighborhood of $X_i \setminus X_{i+1}$ is included in $X_{i+1} \setminus X_{i+2}$ for all $i = 0,1,\ldots,m$ (where $X_0 \coloneqq V$, $X_{m+1} \coloneqq T$, and $X_{m+2} \coloneqq \varnothing$). Notice that the sets $X_i$ are then closed in the graph terminology (which means that no arc leaves $X_i$.) Let $N \coloneqq \bigcup_{i=1}^{m+1}\delta^-(X_i)$ and $M$ be a subset of $N$.

We define the following {\em constrained path partition problem}: decide whether there exists a partition of the arcs of $D$ into $S$-$T$ paths, such that each path traverses an arc of $M$ after at most $\gamma-1$ traversals of $N \setminus M$, and build such a partition when it exists.

In the reductions, roughly speaking, the indices of the $X_i$ will correspond to days, the arcs in $N$ are related to successions of flight legs occurring on consecutive days ($N$ for ``nights''), and those in $M$ are related to such successions occurring in a base with enough time to perform maintenance ($M$ for ``maintenance'').

\subsection{Hardness: proof of Theorem~\ref{thm:ARP}}

As expected to get the hardness result, we establish the following fact.

\begin{lemma}\label{lem:XXXtorouting}
    The constrained path partition problem can be reduced in polynomial time to the finite-horizon aircraft routing problem.
\end{lemma}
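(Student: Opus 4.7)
The plan is to turn each arc of $D$ into a pair of flight legs on a single day and to use the layered structure as the schedule of days. Given an instance $(D,S,T,X_1,\ldots,X_m,N,M,\gamma)$, I would introduce one airport $\alpha_v$ per vertex $v\in V$ and one auxiliary airport $\beta_a$ per arc $a\in A$. The layering forces every arc $a=(u,v)$ to have tail in $X_i\setminus X_{i+1}$ and head in $X_{i+1}\setminus X_{i+2}$ for a unique $i$, so assign $a$ the day $d(a)=i+1$. On day $d(a)$, create a morning flight leg $f_a^1$ from $\alpha_u$ to $\beta_a$ and an afternoon flight leg $f_a^2$ from $\beta_a$ to $\alpha_v$, with times chosen so that $f_a^1$ arrives strictly before $f_a^2$ departs, and so that only ``overnight'' connections of the form $f_a^2\to f_b^1$ at $\alpha_v$ are possible between days. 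Declare $\beta_a$ a base exactly when $a\in M$, with maintenance window equal to the midday gap between $f_a^1$ and $f_a^2$ and duration the full width of that gap; the $\alpha_v$ airports are never bases. Finally, place one airplane at each $\alpha_s$ for $s\in S$ and keep the bound $\gamma$ unchanged.

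For the equivalence, the forward direction is straightforward: a valid partition of $A$ into $S$-$T$ paths yields one airplane per path, following the corresponding $f_a^1,f_a^2$ pairs in order, with maintenance performed at $\beta_a$ whenever $a\in M$; the bound on consecutive $N\setminus M$ arcs then delivers the $\gamma$-day maintenance constraint. For the converse, I would argue that in any valid routing, $f_a^1$ and $f_a^2$ are consecutive in a single airplane's route because $\beta_a$ is used by no other flight; that no airplane can linger at an intermediate $\alpha_v$ because all flights touching $\alpha_v$ occur on only two consecutive days and the in- and out-degrees at $v$ match; and that consequently each airplane's route traces an $S$-$T$ path in $D$. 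Since the only bases are the $\beta_a$ with $a\in M$, the $\gamma$-day maintenance constraint forces every window of $\gamma$ consecutive arcs along the path to contain an $M$-arc. The construction is clearly polynomial in $|V|+|A|$.

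The hard part, as I see it, is not designing the construction but excluding unintended behaviors in the aircraft routing instance: spurious chainings (prevented by dedicating each $\beta_a$ to its arc and by the tight time windows), maintenance at an unintended location (prevented by never making the $\alpha_v$ airports bases), and airplanes parking at or skipping intermediate airports (prevented by the exact matching of arrivals and departures at each $\alpha_v$ together with the absence of later flights there). Once these are checked, the reduction is polynomial and faithful.
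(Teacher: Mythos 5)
Your construction rests on a structural assumption that is false: you claim that the layering forces every arc $(u,v)$ to go from $X_i\setminus X_{i+1}$ to $X_{i+1}\setminus X_{i+2}$ for some $i$. It does not. Arcs with both endpoints in the same layer $X_i\setminus X_{i+1}$ are permitted, and they are exactly the arcs \emph{not} in $N=\bigcup_{i}\delta^-(X_i)$, i.e., the same-day connections whose traversal must \emph{not} count towards the maintenance deadline. These arcs are not a corner case: in the hardness reduction of Proposition~\ref{prop:equi-diff} every arc of the original two-commodity digraph $D'$ lies inside the single layer $X_2\setminus X_3$, and in the converse reduction of Lemma~\ref{lem:routingtoXXX} every same-day flight connection produces such an arc. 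Because your schedule puts each arc on its own day with only overnight connections between consecutive arcs, a path that traverses many arcs of $A\setminus N$ before reaching an arc of $M$ --- perfectly feasible for the constrained path partition problem, since its counter never moves on such arcs --- becomes an airplane flying more than $\gamma$ days without maintenance in your routing instance. So the reduction maps feasible instances to infeasible ones. There is also a feasibility problem in the other direction: two consecutive arcs $a,b$ lying in the same layer receive the same day $d(a)=d(b)=i+1$, so the connection $f_a^2\to f_b^1$ must be made within that day, and you have not arranged the intra-day times (e.g., along a topological order of the layer) to make this chaining realizable.

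The repair is essentially what the paper does: assign day $i$ to \emph{all} vertices of $X_i\setminus X_{i+1}$ and give them intra-day times increasing along a topological order of the subgraph induced by that layer, so that within-layer arcs become same-day connections (costing no day) and only the layer-crossing arcs, i.e., the arcs of $N$, become overnight transitions. Your idea of a dedicated airport $\beta_a$ per arc, declared a base exactly when $a\in M$, is a clean way to localize maintenance opportunities to the arcs of $M$ (the paper instead takes the heads of the $M$-arcs as bases and controls arrival times relative to a nightly 8pm--6am window), and your converse argument via degree balance at each $\alpha_v$ and the dedicated use of $\beta_a$ is sound; but neither of these points salvages the day-assignment error, which is the heart of the reduction.
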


\begin{proof}
    We suppose that we are given an instance of the constrained path partition problem. We build a set $\airports$ by identifying each vertex in $V$ with a distinct airport. The bases $B$ correspond to the heads of the arcs in $M$.

    We attache to each vertex $v$ in $X_i \setminus X_{i+1}$ (setting $X_0 \coloneqq V$ and $X_{m+1} \coloneqq T$) a date of the form (day, time), where the day is the index $i$, and the time is between $6$am and $8$pm and such that it is increasing along the arcs induced by $X_i \setminus X_{i+1}$. (Such times can easily be built by breadth-first-search.) We define $L$ as follows. Each arc gives rise to a flight leg. Its departure and arrival airports are given by the tail and head vertices; the departure time is that of the tail vertex; the arrival time is an arbitrary time between the time of the tail vertex and the time of the head vertex (and in case the arrival airport is an element of $B$, we set this time before $8$pm). Finally, we set  $n_{\alpha} = 1$ for each airport identified with a vertex in $S$ and $n_{\alpha} = 0$ for all other airports.

    We set the time interval for maintenance to [$8$pm--$6$am] for each airport, and the duration of maintenance to the length of this interval (but any length smaller would be fine). The  maximum number of days with no maintenance is $\gamma$.

    It is then clear that any feasible solution of the finite-horizon aircraft routing problem for this input translates into a feasible solution of the constrained path partition problem, and conversely.
\end{proof}

\begin{proposition}\label{prop:equi-diff}
    The constrained path partition problem is \NP-complete when $\gamma \geq 4$.
\end{proposition}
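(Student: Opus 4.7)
The plan is to prove membership in NP and then establish hardness via a polynomial-time reduction from a suitable NP-complete problem.

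For membership, a certificate is a partition of $A$ into $S$-$T$ paths, which has polynomial size. Verifying it amounts to checking arc-disjointness, full coverage of $A$, and, by scanning each path once, the maintenance constraint---at most $\gamma - 1$ arcs of $N \setminus M$ between two successive arcs of $M$ along the path, and at most $\gamma - 1$ such arcs before the first $M$-arc. This is clearly polynomial.

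For the hardness, in line with what the introduction announces, I would reduce from an NP-complete arc-partition problem on digraphs, most naturally the problem of partitioning the arcs of a digraph $G$ into paths of length exactly $\gamma - 1$ (or at most $\gamma - 1$), which is known to be NP-hard as soon as $\gamma - 1 \geq 3$. Given such an instance $G$, the reduction would produce a layered DAG $D$ whose layers $X_i \setminus X_{i+1}$ encode the successive positions within a length-$(\gamma - 1)$ block of a source partition, and a set $M$ marking the layer transitions that correspond to ``boundaries'' between consecutive blocks. Each arc of $G$ would be encoded by one or several arcs between consecutive layers of $D$; the sources in $S$ would feed into all potential block-starts and the sinks in $T$ would collect all potential block-ends. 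In a valid constrained path partition of $D$, each $S$-$T$ path is then forced to alternate between $M$-arcs and runs of exactly $\gamma - 1$ arcs of $N \setminus M$; projecting these runs onto $G$ recovers a length-$(\gamma - 1)$ arc partition of $G$, and conversely.

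The main obstacle is the rigidity of the layered condition: the outneighborhood of $X_i \setminus X_{i+1}$ must lie entirely in $X_{i+1} \setminus X_{i+2}$, so every arc of $D$ must cross exactly one layer boundary and every $S$-$T$ path has the same fixed length. Achieving enough flexibility within this constraint requires a carefully stratified construction, with dummy vertices and auxiliary balancing arcs inserted at each layer so that each arc of $G$ is faithfully represented exactly once and the in- and out-degree conditions at intermediate vertices are preserved. The placement of $M$-arcs is the second delicate point: they must be positioned so that the only feasible decompositions are the ``synchronized'' ones corresponding to genuine length-$(\gamma - 1)$ partitions of $G$, precluding any shifted scheme that would split the paths at the wrong positions and would not correspond to a valid solution of the source instance.
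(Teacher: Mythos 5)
Your NP-membership argument is fine, but the hardness part is a plan rather than a proof, and the two ``delicate points'' you flag at the end are not side issues to be handled later --- they are the entire content of the argument, and neither is resolved. The most serious conceptual problem is the tension you half-identify yourself: in the constrained path partition problem the $S$-$T$ paths must partition \emph{all} arcs of $D$. If each arc of the source digraph $G$ is ``encoded by one or several arcs between consecutive layers of $D$'' so as to represent the unknown position of that arc within a block, then every one of those copies must be covered by some path, so the copies cannot encode a choice; you would need absorption gadgets for the unused copies that keep degrees balanced and do not let the maintenance constraint be satisfied trivially, and no such gadget is given. The choice of source problem is also shaky: partitioning the arcs into paths of length \emph{at most} $\gamma-1$ is trivially feasible (take every arc as its own path), so that variant cannot be the source of hardness, and the ``exactly $\gamma-1$'' variant for \emph{directed} graphs is asserted to be NP-hard without a reference --- and even granting it, an arbitrary digraph (possibly with cycles) does not obviously embed into the required layered DAG structure with sources of outdegree $1$ and sinks of indegree $1$.

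The paper takes a different and much lighter route that you may want to compare against. It reduces from the two-commodity arc-disjoint path problem in an acyclic digraph (asking for $d_1$ arc-disjoint $s_1$-$t_1$ paths and $d_2$ arc-disjoint $s_2$-$t_2$ paths), which is already a statement about arc-disjoint paths in a DAG, so no positional encoding is needed. The construction pads $D'$ with extra in-pendant and out-pendant vertices ($K_1,K_2,K'$ and $L_1,L_2,L'$) so that the degree conditions guarantee the whole arc set can always be partitioned into $S$-$T$ paths, attaches length-two tails at both ends, and places $M$ asymmetrically: paths entering through $K_1$ receive a ``maintenance'' arc immediately and can therefore defer the next one to the very last arc (reachable only via $L_1$), whereas paths entering through $K_2$ receive none and are forced by the counter $\gamma=4$ to exit through $L_2$, whose $M$-arc sits one layer earlier. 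A counting argument ($|K_1|=d_1$, $|K_2|=d_2$) then recovers the two commodities. The lesson is that the maintenance constraint is naturally suited to forcing a \emph{pairing} between entry and exit gadgets, not to counting path lengths; if you want to salvage your approach you would have to supply the full stratified construction and the absorption gadgets, which is substantially harder than the paper's reduction.
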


The proof relies on the hardness of the {\em two-commodity arc-disjoint path problem in an acyclic digraph}, which is \NP-complete; see \cite{even1976complexity}, where acyclicity is implicit, and also \cite[Theorem 19.7]{korte2011combinatorial}, where acyclicity is explicit. The input of this problem is formed by an acyclic digraph $D'=(V',A')$, with two sources $s_1,s_2$ and two sinks $t_1,t_2$, and in two integer numbers $d_1,d_2$. It consists in deciding whether there exists a collection of pairwise arc-disjoint paths, $d_1$ of them from $s_1$ to $t_1$ and $d_2$ of them from $s_2$ to $t_2$.

\begin{proof}[Proof of Proposition~\ref{prop:equi-diff}]
    We proceed by reduction from the two-commodity arc-disjoint path problem in an acyclic digraph $D'$. Given an input of this problem, we construct a new graph $D$ from $D'$ by adding new vertices and new arcs, as follows.
    
    We introduce $d_1$ vertices with an arc to $s_1$ and $d_2$ vertices with an arc to $s_2$. The former are denoted by $K_1$ and the latter by $K_2$. In addition, for each vertex $v$ with $\deg^-(v) < \deg^+(v)$, we introduce $\deg^+(v)-\deg^-(v)$ vertices with an arc to $v$. We denote them $K'$. Similarly, we introduce $d_1$ vertices with an arc from $t_1$ and $d_2$ vertices with an arc from $t_2$. The former are denoted by $L_1$ and the latter by $L_2$. In addition, for each vertex $v$ with $\deg^+(v) < \deg^-(v)$, we introduce $\deg^-(v)-\deg^+(v)$ vertices with an arc from $v$. We denote them by $L'$.

    For each vertex $k$ in $K_1 \cup K_2 \cup K'$, we introduce two new vertices $u_k$, $v_k$, and two new arcs $(u_k,v_k)$, $(v_k,k)$, so that we have a path of length two from a source $u_k$ to the vertex $k$. Similarly, for each vertex $l$ in $L_1 \cup L_2 \cup L'$, we introduce two new vertices $u_l$, $v_l$, and two new arcs $(l,u_l)$, $(u_l,v_l)$, so that we have a path of length two from the vertex $l$ to a sink $v_l$. We define $S$ to be all sources $u_k$ and $T$ to be all sinks $v_l$. This describes completely the graph $D = (V,A)$.
    
    Then we define 
    \begin{itemize}
        \item $X_1\coloneqq V \setminus \{u_k \colon k \in K_1 \cup K_2 \cup K'\}$.
        \item $X_2 \coloneqq X_1 \setminus \{v_k \colon k \in K_1 \cup K_2 \cup K'\}$.
        \item $X_3 \coloneqq X_2 \setminus (V' \cup K_1 \cup K_2 \cup K')$.
        \item $X_4 \coloneqq X_3 \setminus (L_1 \cup L_2 \cup L')$.
    \end{itemize}

    Finally, we set 
    \[
        M \coloneqq \{(u_k,v_k) \colon k \in K_1 \} \cup \{(v_k,k)\colon k \in K'\} \cup \{(l,u_l) \colon l \in L_2\} \cup \{(u_l,v_l) \colon l \in L_1 \}\, .
    \]
    The graph $D$ together with the $X_i$ and $M$ form a valid instance of the constrained path partition problem. We set $\gamma \coloneqq 4$, which will prove the $\NP$-completeness for the case $\gamma = 4$. The other cases with $ \gamma > 4$ can be obtained via a direct adaptation of this proof (e.g., by adding vertices between $V'$ and $L_1 \cup L_2 \cup L'$.)

    We finish the proof by checking that there is a feasible solution to the two-commodity arc-disjoint path problem in $D'$ if and only if there is a feasible solution to the constrained path partition problem on $D$.

    Suppose that the two-commodity arc-disjoint path problem on $D'$ has a solution. To obtain a feasible solution to the constrained path partition problem, we extend each of the $d_1$ paths going from $s_1$ to $t_1$ by adding the arcs $(u_k,v_k)$, $(v_k,k)$, and $(k,s_1)$ with $k \in K_1$ at the beginning, and the arcs $(t_1,l)$, $(l,u_l)$, and $(u_l,v_l)$ with $l \in L_1$ at the end. We proceed similarly for the $d_2$ paths going from $s_2$ to $t_2$. Because of the degree condition, the arcs of $D$ not covered by these paths can be partitioned into arc-disjoint paths starting from the $u_{k'}$ to the $v_{l'}$, with $k' \in K'$ and $l' \in L'$. All these paths are $S$-$T$ paths, partitioning the arcs of $D$, and each such path traverses an arc of $M$ after at most $\gamma-1$ traversals of $N \setminus M$.

    Conversely, suppose that the constrained path problem admits a feasible solution. Observe that any path going through $K_2$ must also go through $L_2$ to satisfy the requirement on $M$. Since all arcs of $M$ incident to $L_2$ are covered by paths going through $K_2$, any path going through $K_1$ must end in $L_1$ to satisfy the requirement on $M$. Since $|K_1|=d_1$ and $|K_2|=d_2$, the result follows.
\end{proof}

\begin{proof}[Proof of Theorem~\ref{thm:ARP}]
    The finite-horizon aircraft routing problem being obviously in \NP, the result is a direct consequence of Lemma~\ref{lem:XXXtorouting} and Proposition~\ref{prop:equi-diff}.
\end{proof}

\subsection{A polynomial case}\label{subsec:polynomial}
The special case of interest in this subsection, which we refer to as the ``quiet night'' version, can be described as follows. All bases are located in the same time zone. At every such base, the typical duration of maintenance corresponds to the length of the maintenance interval. All arrivals at a base occur before the maintenance interval begins, and all departures take place after it ends. We assume that the planning horizon starts the day before the first departure of a flight, outside the maintenance interval of any base, and end the day after the last arrival of a flight, again outside the maintenance interval of any base. In particular, any airplane that overnights at a base has sufficient time to undergo maintenance. Finally, we assume that all flights take less than 24h. (Note that the special case of the quiet night version reflects the real situation of many airlines.)

In the following, we show how to reduce the ``quiet night'' version to the constrained path partition problem.

\begin{lemma}\label{lem:routingtoXXX}
    The quiet night version of the aircraft routing problem can be reduced in polynomial time to the constrained path partition problem, with $|S|$ equal to the number of airplanes.
\end{lemma}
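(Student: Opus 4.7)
The plan is to construct a layered time-expanded network whose layers correspond to the days of the horizon, each arc of $D$ representing either a within-day activity (a flight or a unit of waiting time at an airport) or a between-day overnight transition. Arcs in $N$ will be exactly the overnight transitions, and those in $M$ will be the ones incident to bases, where under the quiet night hypothesis maintenance is automatic.

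Concretely, I would proceed as follows. Label the days of the horizon $1, \ldots, K$. For each airport $\alpha \in \airports$ and each day $i$, I introduce a morning vertex $u_{i,\alpha}$, an evening vertex $w_{i,\alpha}$, and one vertex per flight departure or arrival taking place at $\alpha$ on day $i$, ordered by time. I chain these vertices by \emph{parallel} ground arcs whose multiplicity equals the number of airplanes present at $\alpha$ during the corresponding time interval. The crucial point is that this multiplicity is determined a priori by $n_\alpha$ and the fixed schedule, and can therefore be computed in polynomial time. Each flight contributes one arc from its departure vertex to its arrival vertex, and for each pair $(\alpha, i)$ with $i < K$ I add as many parallel overnight arcs from $w_{i,\alpha}$ to $u_{i+1,\alpha}$ as airplanes are overnighting at $\alpha$ that night. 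Finally, I add $n_\alpha$ source vertices at each airport (each with one arc to $u_{1,\alpha}$), and symmetrically one sink vertex per airplane at the end of the horizon, so that $|S| = \sum_\alpha n_\alpha$.

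Taking $X_i$ to be the set of vertices associated with days $\geq i$, the source, overnight, and sink arcs each cross exactly one day boundary whereas ground and flight arcs stay within a single day; hence the layering hypothesis of the constrained path partition problem holds and $N$ is exactly the set of these inter-day arcs. I place in $M$ the overnight arcs incident to a base together with the source arcs (the latter modelling a ``fresh'' airplane at the start of the horizon). The deterministic multiplicities give indegree equal to outdegree at every internal vertex; sources have indegree $0$ and outdegree $1$, and sinks symmetrically. The equivalence is then essentially definitional: each $S$-$T$ path of an arc-partition is the chronological itinerary of one airplane, the partition condition says that every flight and every waiting/overnight slot is used exactly once, and the ``at most $\gamma-1$ traversals of $N \setminus M$ between $M$ arcs'' condition translates into ``at most $\gamma$ days between two consecutive maintenances''.

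The main obstacle is the bookkeeping: I must verify that the a priori multiplicities of ground and overnight arcs exactly match the flow conservation at every internal vertex in both directions of the correspondence, and that the source and sink arcs are treated so that the maintenance counter is reset consistently at the horizon boundaries (without spuriously forbidding feasible routings). Once this is sorted out, the reduction is polynomial since every quantity involved is bounded by the number of flights and airplanes.
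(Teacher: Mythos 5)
Your construction is essentially the paper's: a time-expanded network with one source and one sink vertex per airplane, one arc per flight leg, parallel ground and overnight arcs whose multiplicities are determined a priori by $n_\alpha$ and the fixed schedule, day-indexed sets $X_i$, and $M$ consisting of the inter-day arcs corresponding to overnighting at a base. The only discrepancy is that you place \emph{all} source arcs in $M$, whereas the paper puts a source arc in $M$ only when the starting airport is a base (the first day of the quiet-night horizon is flight-free, so an airplane beginning at a non-base airport should not receive a maintenance credit for that first night); with that correction the two reductions coincide.
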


\begin{proof}
Given an arbitrary instance of the aircraft routing problem in its quiet night version, we construct a corresponding instance of the constrained path partition problem as follows. We introduce a set $S$ and a set $T$, each of size equal to the number of airplanes. We build a set $V$ as follows. We put in $V$ the sets $S$ and $T$, and also the $(\alpha,\tau)$ for every airport $\alpha$ and every time $\tau$ that is either the beginning of a day for $\alpha$ or the departure of a flight leg from $\alpha$. We do not consider vertices $(\alpha,\tau)$ for $\tau$ a time during the first or last days of the horizon, which are days on which no flight occurs at all, by definition of the quiet night version.
We introduce now the arcs of $D$:

\begin{enumerate}[label=(\roman*)]  
    \item\label{source} For each $ s \in S $, we add an arc of the form $(s,(\alpha,\tau))$, where $ \alpha$ is an airport and where $\tau$ is the earliest possible time for that $\alpha$. In this construction, we ensure that each $ (\alpha, \tau) $ has an indegree of $ n_{\alpha} $ and each $ s \in S $ has outdegree $1$. This is feasible since $ |S| = \sum_{\alpha \in \airports}  n_{\alpha}$ and each airplane has a departure time from its initial airport.  
    \item\label{leg} We add an arc of the form $ \bigl((\alpha, \tau), (\alpha', \tau')\bigr) $ for every flight leg $\ell$ departing from $ \alpha $ at time $ \tau $, where $\alpha'$ is the arrival airport of $\ell$ and where $\tau'$ is the earliest possible time for $ \alpha' $ after the arrival of $ \ell $.  
    \item\label{ground} We add copies of an arc $\bigl((\alpha, \tau), (\alpha, \tau')\bigr)$, where $\tau'$ is the first time after $\tau$ at airport $\alpha$. The number of copies is equal to the number of airplanes that are at $\alpha$ immediately after time $\tau$, which is fully determined by the input.
    \item\label{end} We add an arc $ ((\alpha, \tau), t) $ whenever a flight leg $ \ell $ departing from $ \alpha $ at time $ \tau $ arrives at an airport where there is no further time after the arrival of $ \ell $.  
    \item\label{balance} We add arcs of the form $ ((\alpha, \tau), t) $, where $ t \in T $, and $ \tau $ is the latest time from $ \alpha $. The number of arcs added is chosen so that $(\alpha, \tau)$ has an outdegree equal to the number of airplanes remaining at $\alpha$ after time $\tau$, plus the number of departures from $\alpha$ at $\tau.$ \end{enumerate}

(The last two steps are done in such a way that each $t$ has indegree $1$.) We set $m$ to be the total number of days of the horizon. For $i \in [m]$, we define $X_i$ as $T$ together with the pairs $(\alpha,\tau)$ where $\tau$ occurs on day $i$ or after. We set $X_0 \coloneqq V$, $X_{m+1} \coloneqq T$, and $X_{m+2} \coloneqq \varnothing$. The set $M$ are formed by those arcs of $N=\bigcup_{i=1}^{m+1}\delta^-(X_i)$ corresponding to overnighting at a base. More precisely, an arc of $N$ is in $M$ if the following extra conditions hold:
\begin{itemize}
    \item for an arc of type~\ref{source}, $\alpha$ is in $B$.% and the beginning of the planning horizon is the day before $\tau$.
    \item for an arc of type~\ref{leg}, $\alpha'$ is in $B$.% and the actual arrival of $\ell$ is the day before $\tau'$.
    \item for an arc of type~\ref{ground}, $\alpha$ is in $B$.
    \item for an arc of type~\ref{end}, $\ell$ arrives at a base.% the day before the end of the horizon.
    \item for an arc of type~\ref{balance}, $\alpha$ is in $B$. 
    \end{itemize}

This way, it is immediate to check that we get a valid instance of the constrained path problem. 

Clearly, a feasible solution to the aircraft routing problem leads to a partition of the arcs of $D$ into $S$-$T$ paths, such that each path traverses an arc of $M$ after at most $\gamma - 1$ traversals of $N \setminus M$, and conversely.
\end{proof}

We establish now the polynomiality of the constrained path partition problem when $S$ is of fixed size, implying with Lemma~\ref{lem:routingtoXXX} the polynomiality of the quiet night version of the aircraft routing problem when the number of airplanes is fixed. The proof is inspired by the celebrated proof of polynomiality of the directed subgraph homeomorphism problem for directed acyclic graphs and a ``fixed pattern'' by Hopcroft, Fortune, and Wyllie~\cite{fortune1980directed}, who rely their approach to a ``pebbling game.''

Consider the following game played on the graph $D$ given in input of the constrained path partition problem. Suppose that we start with one pebble on each vertex in $S$. At each time step, consider every vertex with a number of pebbles equal to its outdegree, and move these pebbles along the arcs leaving their current vertex, in such a way that no two pebbles move along the same arc. The game stops when each vertex has a number of pebbles different from its outdegree. (We will see that it necessarily means that all pebbles are located on vertices in $T$.)

At this stage, the game is not particularly interesting, as the only choice to be made for pebbles moving from a vertex is the arcs along which they move (with the constraint of having exactly one pebble for each arc), and with indistinguishable pebble, all realizations of the game are identical. We will soon introduce an additional constraint, but first, we establish a few properties of the game that will be useful later. The set of arcs along which pebbles move at time step $i$ and the set of vertices occupied by pebbles at the end of time step $i$ are unambiguously determined. Let $A_i$ and $V_i$ denote these sets of arcs and vertices, respectively. Moreover, we set $V_0 \coloneqq S$.

We write $u \preceq v$ for two vertices $u, v \in V$ if there is an $u$-$v$ path in $D$. This determines a partial order since $D$ is acyclic. Without loss of generality, we assume that $D$ has no isolated vertex.

    \begin{lemma}\label{lem:min-deg}
        At the end of time step $i$, a vertex not in $T$ has a number of pebbles equal to its outdegree if and only if it is minimal for $\preceq$ among the vertices in $V_i$.
    \end{lemma}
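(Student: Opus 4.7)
My plan is to assign to each non-sink vertex $v$ a \emph{firing time} $f(v)$, the step at which the pebbles accumulated at $v$ leave along its outgoing arcs, and then translate the lemma into a statement about $f$. Since $D$ is acyclic with indegree equal to outdegree at each non-source non-sink vertex, I can define $f$ recursively by $f(v)=1$ for $v\in S$ and $f(v)=1+\max_{u\in N^-(v)}f(u)$ for other non-sink $v$; the DAG structure makes this well defined. A straightforward induction on $i$ then confirms that the vertices firing at step $i$ are exactly those with $f(v)=i$, so each non-sink fires exactly once, and $u\prec v$ implies $f(u)<f(v)$ (sinks have no outgoing arcs, so are never proper ancestors).

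\textbf{Translating the lemma.} From $f$ I read off, at the end of step $i$, two facts about a non-sink $v$: its pebble count equals $|\{u\in N^-(v):f(u)\le i\}|$ when $i<f(v)$ and $0$ otherwise, so pebble count equals $\deg^+(v)$ iff $i=f(v)-1$; and $v\in V_i$ iff $v$ holds at least one pebble. The lemma is thus equivalent to: for a non-sink $v$, $v$ is $\preceq$-minimal in $V_i$ iff $i=f(v)-1$. The forward direction is immediate: if $i=f(v)-1$, any ancestor $u\prec v$ is a non-sink with $f(u)<f(v)=i+1$, so $u$ has already fired, holds $0$ pebbles, and is not in $V_i$; meanwhile $v\in V_i$ because its pebble count $\deg^+(v)$ is positive (no isolated vertices).

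\textbf{Converse and main obstacle.} For the converse I assume $v$ is minimal in $V_i$ with $i<f(v)-1$ and derive a contradiction. The set $\mathcal{U}:=\{u\prec v:f(u)>i\}$ is non-empty, since it contains the in-neighbor of $v$ realizing the max in the definition of $f(v)$, so I pick a $\preceq$-minimal element $u_0\in\mathcal{U}$. Each in-neighbor $w$ of $u_0$ is a proper ancestor of $v$ with $f(w)<f(u_0)$, and the $\preceq$-minimality of $u_0$ in $\mathcal{U}$ forces $f(w)\le i$. Hence every in-neighbor of $u_0$ has already fired, so $u_0$ has pebble count $\deg^+(u_0)>0$ without having fired itself, which places $u_0\in V_i$ with $u_0\prec v$, contradicting the minimality of $v$. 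The delicate step is exactly this selection: an in-neighbor of $v$ with $f>i$ need not itself lie in $V_i$, whereas choosing a $\preceq$-minimal ancestor in $\mathcal{U}$ positions $u_0$ at the very instant just before it fires, which is precisely what is needed to guarantee $u_0\in V_i$.
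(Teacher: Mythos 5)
Your proof is correct, but it takes a genuinely different route from the paper's. The paper argues by induction on $i$ directly on the pebble dynamics: for the forward direction it traces an $s$-$u$ path to each inneighbour $u$ of a minimal $v\in V_{i+1}$ to show all inneighbours have already fired and that the arriving pebbles cannot yet have left $v$; for the converse it follows a pebble on $v$ back to the time step at which it left an inneighbour $u$ and contradicts the induction hypothesis at that earlier step. You instead collapse the dynamics into a single level function $f$ (the firing time, one plus the longest $S$-to-$v$ distance in arcs), verify once that the game is governed by $f$, and then prove the lemma as a static statement about $f$; your converse—passing to a $\preceq$-minimal element of $\{u\prec v: f(u)>i\}$ to exhibit an unfired ancestor all of whose in-arcs have already delivered, hence a vertex of $V_i$ below $v$—is exactly the right fix for the obstacle you identify. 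The payoff of your route is that the same invariant immediately yields the facts the paper needs next (each non-sink fires exactly once, hence Lemma~\ref{lem:Ai}, and $r\leq |V|$ since $f$ is bounded by the longest path length). The cost is that your ``straightforward induction'' showing that the vertices firing at step $i$ are exactly those with $f(v)=i$ is where the real content lives—in particular that a non-sink's pebble count can reach its outdegree only after every in-arc has delivered, which uses the equality of in- and outdegree—so it should be written out rather than asserted. Two cosmetic points: $D$ may have parallel arcs (the reduction in Lemma~\ref{lem:routingtoXXX} adds copies of arcs), so the pebble count at $v$ is the number of in-\emph{arcs} whose tails have fired rather than the number of in-neighbours, which does not affect the argument; and your recursion for $f$ tacitly uses that every in-neighbour of a vertex is itself a non-sink, which holds because it has an outgoing arc but deserves a word.
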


    \begin{proof}
        We proceed by induction on $i$. This is obviously true for $i=0$. Suppose that it is true for some $i \geq 0$. 
        
        Consider a vertex $v \in V_{i+1} \setminus T$ minimal for $\preceq$, at the end of time step $i+1$. For every inneighbor $u$ of $v$, look at any $s$-$u$ path, with $s \in S$. By the minimality of $v$, there are no pebbles on any vertex of this path. Since there was a pebble on $s$ initially, each vertex on this path had at some point a number of pebbles equal to its outdegree. This means that at some point, there were pebbles on each inneighbor $u$ of $v$. By minimality of $v$, these pebbles have left these inneighbors at the beginning of time step $i+1$ or before. At the end of time step $i+1$, the number of pebbles that have arrived on $v$ is at least equal to the indegree of $v$. At this stage, these pebbles might have left $v$ but this is not the case as we explained now. Indeed, $v$ cannot be minimal for $V_{i'}$ for $i' \leq i$: if it were minimal for such a $i'$, by induction, there could not be any pebble on $v$ at the end of time step $i+1$. So, there are on $v$ a number of pebbles at least equal to the indegree of $v$. On the other hand, induction shows that pebbles leave a vertex at most once during the first $i$ time steps, and hence, there are on $v$ a number of pebbles at most equal to the indegree of $v$. Therefore, at the end of time step $i+1$, the number of pebbles on $v$ is exactly its outdegree.

        Conversely, consider a vertex $v$ not in $T$ with a number of pebbles equal to its outdegree, just before time step $i+2$. It belongs to $V_{i+1}$ by definition. Suppose for a contradiction that there is a vertex $w \prec v$ with at least one pebble $p$ on it. Look at an $s$-$v$ path going through $w$, with $s \in S$. Denote by $u$ the vertex just before $v$ on this path. There is a pebble $q$ on $v$ that comes from $u$: indeed, the number of pebbles on $v$ is equal to its indegree, and by induction, each inneighbor has pebbles leaving it at most once. When $q$ left $u$---say at time step $i'\leq i$---the pebble $p$ was on a vertex $w' \prec u$. This implies that the vertex $u$ was not minimal for $\preceq$ among the vertices of $V_{i'}$. Yet, the number of pebbles on $u$ was equal to its outdegree because $q$ left $u$; this contradicts the induction hypothesis.
    \end{proof}
    
    Lemma~\ref{lem:min-deg} shows that the game is well-defined, i.e., as long as there are pebbles not located on a vertex in $T$, some pebbles will move at the next time step.

    \begin{lemma}\label{lem:Ai}
        The $A_i$ form a partition of $A$.
    \end{lemma}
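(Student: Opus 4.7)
The plan is to verify two claims: first, that the sets $A_i$ are pairwise disjoint, and second, that their union equals $A$. Both will follow from a single auxiliary fact, already invoked inside the proof of Lemma~\ref{lem:min-deg}, namely that pebbles leave any given vertex at most once during the entire game.

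For the disjointness part, I would argue as follows. At a time step $i$, the game rule forces the pebbles leaving a vertex $u$ to use distinct arcs, and since there are exactly $\deg^+(u)$ of them, each arc of $\delta^+(u)$ is used exactly once during that time step. Consequently, if pebbles leave $u$ only once across the whole game, then no arc of $\delta^+(u)$ can belong to two different $A_i$. The at-most-once property itself I would prove by induction on the topological order induced by $\preceq$: for a source $s \in S$, the lone pebble departs at time step $1$ along the unique outgoing arc and cannot return because $D$ is acyclic; for an intermediate vertex $u$, the balance $\deg^-(u) = \deg^+(u)$ implies that, once pebbles leave $u$ (which requires each in-neighbor to have already sent its pebbles to $u$), further accumulation on $u$ would force some in-neighbor to send again, which the induction hypothesis forbids.

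For the coverage part, I would again induct on the topological order. The base case $s \in S$ is immediate, since the sole outgoing arc of $s$ lies in $A_1$. For a non-source $v \notin T$, the induction hypothesis applied to each in-neighbor $u$ yields that every arc $(u,v)$ belongs to some $A_i$, so after finitely many time steps $v$ carries exactly $\deg^-(v) = \deg^+(v)$ pebbles; at the next time step the game rule then fires and places every outgoing arc of $v$ into some $A_{i'}$. A sink $t \in T$ need not send any pebble; its incoming arcs are covered by applying the same argument to its in-neighbors. Combined with the disjointness established above, this shows that the $A_i$'s partition $A$.

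The genuinely delicate step is the at-most-once claim, which is essentially a restatement of the inductive observation made inside the proof of Lemma~\ref{lem:min-deg}. Once this is in hand, the remaining reasoning is a routine topological-order induction, so I would simply reference the already-established lemma rather than rederive the bookkeeping.
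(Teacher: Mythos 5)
Your proof is correct and follows essentially the same two-part decomposition as the paper: every arc is traversed at least once because every non-sink vertex eventually fires, and at most once because no vertex fires more than once. The only difference is cosmetic: where the paper derives the at-most-once property from the minimality characterization of Lemma~\ref{lem:min-deg}, you rederive it by a direct degree-counting induction along the topological order---the same observation already made inside that lemma's proof, as you note yourself.
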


    \begin{proof}
        Each arc $a$ of $A$ is traversed at some point by a pebble: take any path from $S$ to the tail of $a$; for each vertex of this path, there is a time step at which there are pebbles leaving it, and in particular for the tail of $a$.

        Each arc $a$ of $A$ is traversed at most once by a pebble: when a pebble traverses the arc $a$, the tail of $a$ is in $V_i$ and according to Lemma~\ref{lem:min-deg}, there is no pebble located at any vertex smaller than this tail for $\preceq$.
    \end{proof}

\begin{figure}[h]
\begin{center}
			\psscalebox{1.0 1.0} % Change this value to rescale the drawing.
			{
			\includegraphics{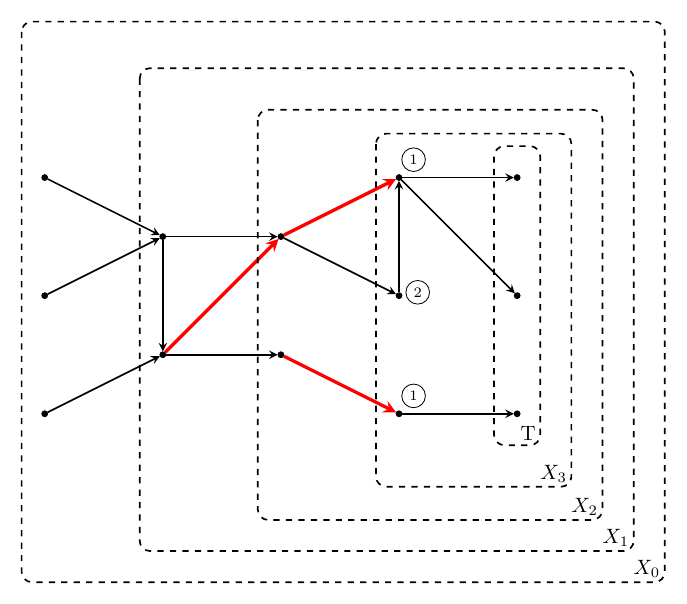}
			}
		\end{center}
\caption{An example of an instance of the constrained path partition problem. The arcs in $M$ are in red. A possible configuration of the ``pebble game'' is also illustrated, with three pebbles in $X_3$, and possible values of their counters after a few moves.\label{fig:game}}
\end{figure}

Now, we add the following feature: each pebble has a counter, initialized at $1$. Each time a pebble traverses an arc in $N\setminus M$, its counter is increased by one. Each time a pebble traverses an arc in $M$, its counter is set back to $1$. Each time a pebble traverses an arc not in $N$, its counter remains constant. To win the game, all pebbles must be located on the vertices in $T$, and all the moves must have been done in such a way that the counter of each pebble has never been larger than $\gamma$. Now, the choice of the arcs along which the pebbles move matters. (Actually, the pebbles can be interpreted as airplanes moving from airports to airports, while satisfying the maintenance constraint. So, going through the constrained path partition problem could be avoided to establish the polynomiality of the quiet night version of the aircraft routing problem. Nevertheless, each step gets easy statements.)

An illustration of a possible configuration of the game is given in Figure~\ref{fig:game}.

    \begin{lemma}\label{lem:equiv}
        The constrained path partition problem can be reduced in polynomial time to the problem of winning the game.
    \end{lemma}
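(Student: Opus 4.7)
My plan is to use the identity reduction: given an instance $(D, S, T, \{X_i\}, N, M, \gamma)$ of the constrained path partition problem, output the pebbling game played on exactly the same $D$ with the same $N$, $M$, and $\gamma$. Polynomiality is then immediate and what remains is the equivalence of solutions. For the forward direction I would start from a partition of $A$ into $S$-$T$ paths $P_1,\ldots,P_{|S|}$, each traversing an arc of $M$ after at most $\gamma-1$ arcs of $N\setminus M$, and pair pebble $p_i$ with the path $P_i$ beginning at its source. The strategy is: whenever the game forces a vertex $v$ to release its pebbles, each pebble currently on $v$ moves along the next arc of its assigned path. To justify that this is well-defined I would prove, by induction on a topological order of $D$ and using Lemma~\ref{lem:min-deg}, that the moment $v$ becomes active is exactly the moment at which the pebbles on $v$ are in bijection with the paths through $v$: at an internal vertex $\deg^-(v) = \deg^+(v)$ coincides with the number of $P_i$ through $v$, and arc-disjointness of the $P_i$ makes the pebble-to-arc assignment legitimate. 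The counter of each pebble then faithfully records the $M$/$(N\setminus M)$ pattern along its path, so no counter ever exceeds $\gamma$.

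For the converse direction, given a winning play I would record each pebble's trajectory through $D$. Because $D$ is acyclic and the play ends with every pebble on a vertex of $T$, each trajectory is a simple $S$-$T$ path. Lemma~\ref{lem:Ai} tells me that the sets $A_i$ of arcs moved at each time step partition $A$, and within a single $A_i$ distinct arcs are traversed by distinct pebbles, so the $|S|$ trajectories jointly partition $A$ into $S$-$T$ paths. The counter condition that certifies the win translates verbatim into the statement that each trajectory hits an arc of $M$ after at most $\gamma-1$ arcs of $N\setminus M$, which is precisely the requirement of the path partition problem.

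The main obstacle will be the synchronization argument in the forward direction: showing that the exact moment the game forces $v$ to fire is the moment at which every path through $v$ has delivered its pebble to $v$. I would extract this from Lemma~\ref{lem:min-deg} by induction on $\preceq$, observing that $v$ becomes minimal in $V_i$ only after every strict ancestor in $\preceq$ has been emptied, so each inneighbor $u$ of $v$ has by then forwarded the pebble of whichever path uses the $(u,v)$-arc, and the supply of pebbles to $v$ is therefore exactly complete at the first time step at which $v$ is forced to act.
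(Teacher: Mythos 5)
Your proposal is correct and follows essentially the same route as the paper: the reduction is the identity, the forward direction steers each pebble along a path of the given solution (your global pebble-to-path assignment is equivalent to the paper's local rule of always taking the next arc of the path containing the arc of arrival), and the converse reads off pebble trajectories and invokes Lemma~\ref{lem:Ai} to get a partition of $A$. Your explicit synchronization argument via Lemma~\ref{lem:min-deg}---that a vertex fires exactly when its indegree-many pebbles, one per in-arc and hence one per path-visit, are all present---is left implicit in the paper, but it is the right justification and not a deviation in approach.
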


    \begin{proof}
        Formally, we should start such a proof by building in polynomial time an instance of the game from an instance of the constrained path partition problem, but the instances are the same and so there is nothing to do. (In particular, the rest of the proof will actually establish that the two problems are polynomially equivalent.)
        
        Consider a solution of the constrained path partition problem. Let the game be at time step $i > 1$ and consider a pebble located at some vertex $v$ whose pebbles will leave it at this time step. Let $P$ be the path of the solution containing the arc along which the pebble arrived on $v$. Make the pebble traverse the next arc on $P$. Since each path of the solution traverses an arc of $M$ after at most $\gamma$ traversals of $N \setminus M$, following this decision for all time steps $i>1$ (for $i=1$, there is no decision to be taken), no pebble will have a counter exceeding $\gamma$ along the game.
        
        Consider a winning solution of the game. Each pebble determines a path $P$ from $S$ to $T$. These paths form a partition of $A$ because every arc is traversed exactly once by a pebble (Lemma~\ref{lem:Ai}). The counter of the pebble keeps track of the number of arcs of $N$ traversed by the path before traversing an arc of $M$.
    \end{proof}

We introduce now a graph $\calD=(\calV,\calA)$ encoding the possible moves of the pebbles on the graph. This graph is built from the graph $D=(V,A)$, and more precisely from the subsets $V_i$ and $A_i$ defined above. We denote by $r$ the largest index of the $V_i$. Notice that $V_r=T$ (by Lemma~\ref{lem:min-deg} and the remark after its proof).

The vertices $\calV$ are formed by all pairs $\bigl(i,(\boldx^v)_{v \in V_i}\bigl)$, where $i \in [r]$ and $\boldx^v \in\bbZ_+^{\gamma}$ with $\sum_{q=1}^{\gamma}x_q^v = \deg_{A_i}^-(v)$ for every $v \in V_i$.  We include also a vertex of the form $\bigl(0,(\bolds^v)_{v\in S}\bigl)$, with $\bolds^v$ defined for $v \in S$ by setting $s_1^v = 1$ and $s_q^v = 0$ for $q \geq 2$. The vertices of $\calD$ can be interpreted as the possible distributions of the pebbles, with the state of their counters, on the vertices of $D$ after $i$ steps.

    The arcs in $\calA$ are formed by all pairs $\left(\bigl(i,(\boldx^v)_{v \in V_i}\bigl),\bigl(i+1,(\boldy^v)_{v \in V_{i+1}}\bigl)\right)$ of vertices such that there is a map $\pi\colon A_{i+1} \to [\gamma]$  satisfying the following conditions:
    \begin{enumerate}[label=(\roman*)]
        \item\label{xvq} $x_q^v = \bigl|\{a \in \delta_{A_{i+1}}^+(v) \colon \pi(a) = q \}\bigl|$ for every $q \in [\gamma]$ and every $v \in V_i \setminus V_{i+1}$, \smallskip
        \item\label{yvq} $y_q^v = \bigl|\{a \in \delta_{A_{i+1}}^-(v)\setminus N \colon \pi(a) = q \}\bigl| + \bigl|\{a \in \delta_{A_{i+1}}^-(v)\cap (N\setminus M) \colon \pi(a) = q-1 \}\bigl|$ for $q \in \{2,\ldots,\gamma\}$ and every $v \in V_{i+1} \setminus V_i$, \smallskip
         \item\label{yv1} $y_1^v = \bigl|\{a \in \delta_{A_{i+1}}^-(v)\setminus N \colon \pi(a) = 1 \}\bigl| + | \delta_{A_{i+1}}^-(v)\cap M|$ for every $v \in V_{i+1} \setminus V_i$, and \smallskip
         \item\label{xy} $x_q^v = y_q^v$ for every $q \in [\gamma]$ and $v \in V_i \cap V_{i+1}$.
        \end{enumerate} Note that such a map $\pi$ determines uniquely the vectors $(\boldx^v)_{v \in V_i\setminus V_{i+1}}$ and $(\boldy^v)_{v \in V_{i+1}\setminus V_i}$. The arcs of $\calD$ can be interpreted as the possible changes of the distributions of the pebbles after one move. The value $\pi(a)$ can then be seen as the value of the counter of the pebble moving along the arc $a$, at its entrance.

\begin{lemma}\label{lem:win-game}
    Deciding whether there is a winning strategy and exhibiting such a strategy when it exists can be done in $O(|V|\gamma^{|S|})$.
\end{lemma}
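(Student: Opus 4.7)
The plan is to cast the winning question as a reachability problem in the auxiliary layered digraph $\calD$ constructed just before the lemma, and then run a standard graph search.

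First, I would argue that the pairs $\bigl(i,(\boldx^v)_{v\in V_i}\bigr)$ enumerate exactly the reachable pebble configurations. The sequences $V_0,V_1,\ldots,V_r$ and $A_1,\ldots,A_r$ are deterministic functions of the instance (Lemmas~\ref{lem:min-deg} and~\ref{lem:Ai}), so the only data needed to encode a configuration is, for each $v\in V_i$, the multiset of counter values of the pebbles sitting on $v$. The arc conditions~\ref{xvq}--\ref{xy} on the map $\pi\colon A_{i+1}\to[\gamma]$ mirror exactly the three counter-update rules (increment on $N\setminus M$, reset on $M$, unchanged off $N$), and the constraint that every pebble on a moving vertex leaves along a distinct outgoing arc is built into $\pi$ being defined on $A_{i+1}$. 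Consequently, a winning play of the game corresponds bijectively to a directed path in $\calD$ from $\bigl(0,(\bolds^v)_{v\in S}\bigr)$ to some vertex whose first coordinate is $r$; by Lemma~\ref{lem:min-deg} applied at index $r$, any such vertex has all its pebbles on $T=V_r$, so reaching it is precisely winning the game.

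Next, I would bound $|\calV|$. There are always exactly $|S|$ pebbles (conservation), each carrying a counter value in $[\gamma]$. At a fixed time step $i$, the number of pebbles at each $v\in V_i$ is forced to equal $\deg^-_{A_i}(v)$, so only the counter values are free; this gives at most $\gamma^{|S|}$ distinct states per layer. The number of layers $r$ is at most $|V|$ since, by Lemma~\ref{lem:min-deg}, at each step at least one vertex becomes minimal and permanently empties. Hence $|\calV| = O(|V|\gamma^{|S|})$. I would then run a BFS from the initial vertex, generating $\calD$ layer by layer; on success, the BFS tree traces back a path in $\calD$ whose sequence of maps $\pi$ dictates the routing of pebbles along each time step, and via Lemma~\ref{lem:equiv} this produces the required solution to the constrained path partition problem.

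The main obstacle is to keep the total running time at $O(|V|\gamma^{|S|})$ rather than the weaker $O(|V|\gamma^{2|S|})$ that the naive ``$|\calV|$ vertices times $\gamma^{|S|}$ successors each'' bound would suggest. The trick is not to enumerate outgoing arcs of each state of $\calD$ one by one, but instead, for each layer $i$, to index candidate successors by their target counter distribution at time $i+1$ and to test feasibility by a single sweep over $A_{i+1}$: this guarantees that every configuration is discovered and processed only a bounded number of times, with the polynomial-in-$|V|$ bookkeeping absorbed in the constant hidden by the $O(\cdot)$.
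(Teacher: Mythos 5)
Your overall strategy is the same as the paper's: identify winning plays with paths in $\calD$ from $\bigl(0,(\bolds^v)_{v\in S}\bigr)$ to a layer-$r$ vertex, bound the size of $\calD$ by $O(|V|\gamma^{|S|})$ using $r\leq|V|$, and run a linear-time search. However, the one sentence you devote to the heart of the matter---that a winning play ``corresponds bijectively'' to such a path because the arc conditions ``mirror exactly'' the counter-update rules---is where essentially all the work lies, and it hides a step that does not follow automatically. In the backward direction (path in $\calD$ $\Rightarrow$ realizable play), the game's rules \emph{force} every vertex holding a number of pebbles equal to its outdegree to release them; so to realize condition~\ref{xy} (pebbles on $w'\in V_i\cap V_{i+1}$ keep their counters) you must argue that such a $w'$ does \emph{not} have a pebble count equal to its outdegree at that step. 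The paper does this via Lemma~\ref{lem:min-deg}: since at least one more pebble will arrive on $w'$ later, $w'$ is not minimal for $\preceq$ in $V_i$, hence its pebbles stay put. Without this, the claimed correspondence is an assertion, not a proof; the full argument is an induction on $i$ handling the three cases $V_i\setminus V_{i+1}$, $V_{i+1}\setminus V_i$, and $V_i\cap V_{i+1}$ separately.

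On the running time, the obstacle you identify (a potential $\gamma^{2|S|}$ blowup from ``vertices times successors'') is real if one counts that way, but your proposed remedy---indexing successors by target counter distributions and sweeping $A_{i+1}$---is vaguer than necessary. The clean resolution, which is what the paper uses, is to bound the total number of \emph{arcs} of $\calD$ directly: an arc between layers $i$ and $i+1$ is determined by the map $\pi\colon A_{i+1}\to[\gamma]$ together with the counter distribution of the pebbles that do not move, which amounts to one counter value in $[\gamma]$ per pebble overall, i.e., at most $\gamma^{|S|}$ arcs per layer and $O(|V|\gamma^{|S|})$ in total. A search linear in the number of arcs then gives the stated bound with no further tricks.
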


\begin{proof}
     We prove that there is a way of winning if and only if there is a path in $\calD$ going from $\bigl(0,(\bolds^v)_{v\in S}\bigl)$ to a vertex of the form $\bigl(r,(\boldx^v)_{v\in V_r}\bigl)$. Once this is done, the conclusion follows easily: deciding the existence of such a path is linear in the number of arcs; the number of arcs is at most $r$ times the number of possible maps $\pi$, and hence at most $r\gamma^{|S|}$ since each $A_i$ has a cardinality upper bounded by $|S|$ (number of pebbles). The inequality $r \leq |V|$, which follows from the fact that no vertex is minimal in two distinct $V_i$, leads to expression given in the statement.

    Suppose there is a way of winning the game. For $v \in V_i$, define $\boldx^{iv}$ by setting $x_q^{iv}$ to be the number of pebbles located at $v$ with the counter at $q$. The vertices $\bigl(i,(\boldx^{iv})_{v \in V_i}\bigl)$ and $\bigl(i+1,(\boldx^{(i+1)v})_{v \in V_{i+1}}\bigl)$ are respectively the tail and the head of an arc in $\calA$. Indeed, as noted before the statement, setting $\pi(a)$ to the state of the counter of the pebble entering $a$ actually makes that the conditions from \ref{xvq} to \ref{xy} are satisfied.

    Conversely, suppose there is such a path in $\calD$. Denote by $\bigl(i,(\boldx^{iv})_{v \in V_i}\bigl)$ the vertices of this path, with $i \in \{0,1,\ldots,r\}$ (where $(\boldx^{0v})_{v \in V_0}= (\bolds^v)_{v\in S}$). There will be one time step for each integer $i$ in $[r]$. We prove by induction on $i$ that there is a way to realize a sequence of $i$ moves from the initial configuration of the pebbles to a configuration such that each vertex $v$ in $V_i$ has exactly $x_q^{iv}$ pebbles with the counter set at $q$, for each $q \in [\gamma]$ (and such that all pebbles are located on vertices in $V_i$). This is true by construction for $i=0$. Let us assume that it is true for some $i \leq r-1$. By induction, there are exactly $x_q^{iv}$ pebbles with the counter set at $q$, for each $q \in [\gamma]$ and each $v \in V_i$. 

    Consider the arc of $\calD$ from $\bigl(i,(\boldx^{iv})_{v \in V_i}\bigl)$ to $\bigl(i+1,(\boldx^{(i+1)v})_{v \in V_{i+1}}\bigl)$, with the corresponding map $\pi$. By definition of $\pi$, there is a way to move the pebbles along the arcs so that there is now exactly $x_q^{(i+1)v}$ pebbles with the counter set at $q$ on vertex $v$ of $V_{i+1}$, as we explain now.

    \begin{itemize}
        \item Let $u$ be a vertex in $V_i \setminus V_{i+1}$.  There are $x_q^{iu}$ pebbles on this vertex with the counter set at $q$, and there are $x_q^{iu}$ arcs $a$ leaving $v$ in $A_{i+1}$ such that $\pi(a)=q$ (condition \ref{xvq}). By definition, the pebbles on $u$ move at step $i+1$. For the $x_q^{iu}$ pebbles with the counter set at $q$, choose the arcs $a$ with $\pi(a)=q$. 
        %Move arbitrarily these pebbles along these arcs.
        \item Let $w$ be a vertex in $V_{i+1} \setminus V_i$. Let $q \in \{2,\ldots,\gamma\}$. The number of pebbles on $w$ after the move, with the counter set at $q$, is equal to the number of arcs $a$ in $\delta_{A_{i+1}}^-(w)$ with $\pi(a) = q$ if $a \notin N$ and with $\pi(a) = q-1$ if $a \in N \setminus M$ (condition~\ref{yvq}). The number of pebbles on $w$ after the move, with the counter set at $1$, is equal to the number of arcs $a$ in $\delta_{A_{i+1}}^-(w) \setminus N$ with $\pi(a) = 1$ plus the number of arcs in $\delta_{A_{i+1}}^-(w) \cap M$ (condition~\ref{yv1}).
        \item Let $w'$ be a vertex in $V_i \cap V_{i+1}$. The pebbles located on $w'$ are not moved, as we explain now. If $w' \in T$, this is clear. If $w' \notin T$, then Lemma~\ref{lem:min-deg} tells us that $w'$ will eventually have no pebbles, and $w'$ being in $V_{i+1}$ means that at least one pebble will arrive on $w'$ at the end of time step $i+1$ or later; Lemma~\ref{lem:min-deg} shows then that $w'$ is not minimal for $\preceq$. According to condition~\ref{xy} the number of pebbles on $w'$ with the counter set to $q$ at the end of time step $i+1$ is therefore equal to $x_q^{iw'}$, as desired.
    \end{itemize}
 Therefore, for all possible values of $q$, the number of pebbles on a vertex $w \in V_{i+1}$ with the counter set to $q$ after the $(i+1)$th move is equal to $x_q^{(i+1)w}$. 
\end{proof}

\begin{theorem}
    The quiet night version of the aircraft routing problem can be solved in polynomial time when the number of airplanes is fixed.
\end{theorem}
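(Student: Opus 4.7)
The plan is to chain the three preceding results. First I would invoke Lemma~\ref{lem:routingtoXXX} to reduce, in polynomial time, a given instance of the quiet night version of the aircraft routing problem to an instance of the constrained path partition problem in which $|S|$ equals the (fixed) number of airplanes. Then I would apply Lemma~\ref{lem:equiv} to interpret this instance as an instance of the pebble game on the same graph, which costs no additional work since the two problems share the same underlying graph. Finally, Lemma~\ref{lem:win-game} decides whether the game is winnable and produces a winning strategy in time $O(|V|\gamma^{|S|})$.

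The key quantitative observation is that, because the number of airplanes is fixed, $|S|$ is a constant; hence $\gamma^{|S|}$ is polynomial in $\gamma$, which in turn is at most the size of the input. Together with the linear factor $|V|$, the overall running time is therefore polynomial in the input size. Since all three reductions map solutions back and forth constructively, a winning strategy for the game translates directly into a feasible route assignment for the original aircraft routing instance, and infeasibility propagates in the same way.

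I do not anticipate any genuine obstacle: all the substantive combinatorial work (the reduction to the path partition problem via a carefully defined time-expanded graph, the pebble-game encoding, and the dynamic-programming-style exploration of the auxiliary graph $\calD$) has already been carried out in the preceding lemmas, so the theorem emerges as an immediate corollary of their composition.
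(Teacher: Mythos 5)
Your proposal is correct and follows exactly the paper's own argument: the theorem is obtained by composing Lemmas~\ref{lem:routingtoXXX}, \ref{lem:equiv}, and \ref{lem:win-game}, with the observation that $\gamma^{|S|}$ is polynomial once $|S|$ is fixed. Nothing further is needed.
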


\begin{proof}
     Lemmas~\ref{lem:routingtoXXX},~\ref{lem:equiv}, and~\ref{lem:win-game} imply together the desired result.
\end{proof}

 \bibliographystyle{amsplain} 
 \bibliography{biblio}

\end{document}